\newtheorem{theorem}{Theorem}
\newtheorem{lemma}{Lemma}
\newtheorem{claim}{Claim}
\title{The Strongly Stable Matching Problem with Closures}
\author{Naoyuki Kamiyama%
\thanks{This work was supported by JSPS KAKENHI Grant Number JP20H05795 and 
JST ERATO Grant Number JPMJER2301, Japan.}
}
\date{\small Institute of Mathematics for Industry, Kyushu University, Fukuoka, Japan\\
{\ttfamily kamiyama@imi.kyushu-u.ac.jp}}
\begin{document}

\maketitle

\begin{abstract}
In this paper, we consider one-to-one matchings between 
two disjoint groups of agents. 
Each agent has a preference over a subset of the agents in the other group, and 
these preferences may contain ties. 
Strong stability is one of the stability concepts in 
this setting. 
In this paper, we consider the following variant 
of strong stability, and we consider 
computational complexity issues for this solution concept. 
In our setting, 
we are given a subset of the agents on one side.
Then 
when an agent in this subset is not 
matched to any partner, 
this agent cannot become a part of a blocking pair. 
In this paper, we first prove that the problem of determining 
the existence of a stable matching in this setting 
is NP-complete.
Then we give two polynomial-time solvable cases of our problem. 
Interestingly, one of our positive results 
gives a unified approach to the strongly stable matching 
problem and the envy-free matching problem. 
\end{abstract} 

\section{Introduction}

In this paper, we consider one-to-one matchings between 
two disjoint groups of agents. 
Each agent has a preference over a subset of the agents in the other group. 
The most famous solution concept in this setting 
is a stable matching~\cite{GaleS62}. 
In the basic setting of the stable matching, 
each agent has a strict preference, i.e., a preference 
of each agent does not contain ties. 
By contrast, we consider the situation where a preference of an agent 
may contain ties, i.e., an agent may be indifferent between 
potential partners. 
It is known that
ties in preferences make a big difference
in 
the stable matching problem
(see, e.g., \cite{IwamaM08} and \cite[Chapter~3]{Manlove13}). 

Under preferences with ties, the following three 
stability concepts have been mainly considered. 
The first solution concept is weak stability. 
This property guarantees that 
there does not exist an unmatched pair of agents $a,b$ such that 
$a$ (resp.\ $b$) prefers $b$ (resp.\ $a$) to the current partner.
Irving~\cite{Irving94} proved 
that there always exists a weakly stable matching, and 
a weakly stable matching can be found 
in polynomial time by slightly modifying the algorithm 
of Gale and Shapley~\cite{GaleS62}. 
The second solution concept is super-stability.
This property guarantees
that there does not exist an unmatched pair of agents $a,b$ 
such that $a$ (resp.\ $b$) prefers $b$ (resp.\ $a$) to the current partner, or is 
indifferent between $b$ (resp.\ $a$) and the current partner.
The last solution concept is strong stability,
which 
is the main topic of this paper. 
This property guarantees that 
there does not exist an unmatched pair of agents $a,b$ such that
(i) $a$ prefers $b$ to 
the current partner, and (ii) $b$ 
prefers $a$ to the current partner, or is indifferent between 
$a$ and the current partner. 
It is known that 
a super-stable matching and a strongly stable matching 
may not exist (see \cite{Irving94}). 
Thus, one of the central algorithmic problems 
for these two solution concepts
is the problem of determining the existence of 
a matching satisfying these stability concepts. 

In this paper, 
we introduce a variant of the strongly stable matching 
problem, 
which we call 
the strongly stable matching problem with closures, and 
we consider 
computational complexity issues for this problem. 
In order to explain our setting, 
we consider a matching problem between 
doctors and hospitals. 
In our setting, we are given a subset $S$ of 
the hospitals.
Recall that, in the classical setting, 
an unmatched pair of a doctor and a hospital 
has an incentive to make a new pair between them 
when each involved agent prefers the new partner to the current partner 
or is unmatched in the current matching. 
On the other hand, 
in our setting, 
if the hospital is in the set $S$, then any doctor cannot 
claim the empty position in this hospital. 
That is, 
for each hospital $h \in S$, if $h$ is not matched to 
any doctor, then $h$ is closed and 
becomes unavailable for any doctor. 
(See Section~\ref{section:formulation} for the formal
definition of our stability.) 
This closure operation 
is inspired by 
the closure operation in assignment problems with lower quotas
(e.g., \cite{Kamiyama13,MonteT13}). 

Our contribution is summarized as follows. 
First, we prove that 
the problem of determining 
the existence of a stable matching in this setting 
is NP-complete even in a very restricted setting (see Theorem~\ref{theorem:hardness}). 
Then we consider special cases of our problem that 
can be solved in polynomial time. 
We first consider a special case where, in a sense, the preferences of doctors 
are separated 
(see Section~\ref{section:separated} for the formal definition of 
this special case). 
We prove that, in this setting, our problem can be solved in polynomial time 
(see Theorem~\ref{theorem:separate}). 
Interestingly, the result in this special case 
gives a unified approach to the strongly stable matching 
problem~\cite{Irving94,Manlove99} and the envy-free matching problem~\cite{GanSV19} (see Section~\ref{section:envy}).
Next, we consider the case where 
each doctor accepts at most two hospitals, 
and 
we prove that this special case can be solved 
in polynomial time
(see Section~\ref{section:bounded_degree}). 

\paragraph{Related work.} 

Here we summarize related work on strongly stable matchings. 

Irving~\cite{Irving94} 
introduced the solution concept of strong stability, and
gave  
a polynomial-time algorithm for 
the strongly stable matching problem 
(see also \cite{Manlove99}). 
Furthermore, 
Kunysz, Paluch, and 
Ghosal~\cite{KunyszPG16}
considered 
characterization of the set of all strongly stable matchings. 
Kunysz~\cite{Kunysz18} considered 
the weighted version of the strongly stable matching
problem. 

In the many-to-one setting, 
Irving, Manlove, and Scott~\cite{IrvingMS03} and 
Kavitha, Mehlhorn, Michail, and Paluch~\cite{KavithaMMP07}
proposed polynomial-time algorithms for the 
strongly stable matching
problem. 

In the many-to-many setting, 
the papers~\cite{ChenG10,Kunysz19,Malhotra04} considered 
the strongly stable matching problem.
Furthermore, Olaosebikan and Manlove~\cite{OlaosebikanM20}
considered strong stability in the 
student-project allocation problem. 

For the situation where a master list is given, 
Irving, Manlove, and Scott~\cite{IrvingMS08} gave 
a simple polynomial-time algorithm for  
the strongly stable matching problem. 
Furthermore, O'Malley~\cite{OMalley07} gave 
a polynomial-time algorithm for 
the strongly stable matching problem
in the many-to-one setting.
Under matroid constraints, 
Kamiyama~\cite{Kamiyama15,Kamiyama19} gave 
polynomial-time algorithms for 
the strongly stable matching problem
in the many-to-one and many-to-many settings. 

\section{Problem Formulation} 
\label{section:formulation} 

For each positive integer $z$, 
we define $[z] := \{1,2,\dots,z\}$. 

In this paper, a finite simple undirected bipartite graph 
$G$ is denoted by $G = (D, H; E)$, and 
we assume that the vertex set of $G$ is partitioned into 
$D$ and $H$, and 
every edge in $E$ connects a vertex in $D$ and 
a vertex in $H$.
In this paper, 
we call a vertex in $D$ (resp.\ $H$) a \emph{doctor} (resp.\ \emph{hospital}). 
For each doctor 
$d \in D$ and each hospital $h \in H$, 
if there exists an edge in $E$ between $d$ and $h$, then 
this edge is denoted by $(d,h)$. 
For 
each subset $F \subseteq E$
and 
each subset 
$X \subseteq D$ (resp.\ $X \subseteq H$), 
we define $F(X)$ 
as the set of edges $(d,h) \in F$
such that
$d \in X$ (resp.\ $h \in X$). 
For 
each subset $F \subseteq E$
and 
each vertex $v \in D \cup H$, 
we write $F(v)$ instead of $F(\{v\})$. 
For each subset $F \subseteq E$ and 
each subset $X \subseteq D$, 
we define $\Gamma_F(X)$ as the set of 
hospitals $h \in H$ 
such that $F(X) \cap E(h) \neq \emptyset$. 
For 
each subset $F \subseteq E$
and each doctor $d \in D$, 
we write $\Gamma_F(d)$ instead of $\Gamma_F(\{d\})$. 
For each subset $F \subseteq E$, we define 
$D[F]$ as the set of doctors $d \in D$ such that
$F(d) \neq \emptyset$. 
For each subset $X \subseteq D \cup H$, we define 
$G \langle X \rangle$ as the subgraph of $G$ induced by 
$X$. 

The \emph{strongly stable matching problem with closures}, which is 
the main topic of this paper, is defined as follows. 
In this problem, we are given a finite simple undirected bipartite graph 
$G = (D, H; E)$. 
Furthermore, we are given a subset $S \subseteq H$.
For each vertex 
$v \in D \cup H$, we are given 
a transitive binary relation 
$\succsim_v$ on $E(v) \cup \{\emptyset\}$
satisfying the following conditions. 
\begin{itemize}
\item
For every pair of elements 
$e,f \in E(v) \cup \{\emptyset\}$, at least one of $e \succsim_v f$ 
and $f \succsim_v e$ holds.
\item
For every edge $e \in E(v)$, we have $e \succsim_v \emptyset$ and 
$\emptyset \not\succsim_v e$.
\end{itemize}
For each vertex $v \in D \cup H$
and each pair of 
elements $e,f \in E(v) \cup \{\emptyset\}$, 
if $e \succsim_v f$ and $f \not\succsim_v e$ 
(resp.\ $e \succsim_v f$ and $f \succsim_v e$),
then we write 
$e \succ_v f$ 
(resp.\ $e \sim_v f$).  

For each subset $\mu \subseteq E$, $\mu$ is called a \emph{matching in $G$} if
$|\mu(v)| \le 1$ for every vertex $v \in D \cup H$. 
For each matching $\mu$ in $G$ and each vertex $v \in D \cup H$ such that 
$\mu(v) \neq \emptyset$, 
we do not distinguish between $\mu(v)$ and the unique element in $\mu(v)$. 

Let $\mu$ be a matching in $G$.
For each edge
$e = (d,h) \in E \setminus \mu$, 
we say that \emph{$e$ weakly} (resp.\ \emph{strongly}) \emph{blocks $\mu$ on $d$}
if $e \succsim_d \mu(d)$ (resp.\ 
$e \succ_d \mu(d)$).
For each edge
$e = (d,h) \in E \setminus \mu$, 
we say that \emph{$e$ weakly} (resp.\ \emph{strongly}) \emph{blocks $\mu$ on $h$}
if the following conditions are satisfied. 
\begin{itemize}
\item
If $h \in S$, then 
$\mu(h) \neq \emptyset$ and 
$e \succsim_h \mu(h)$ (resp.\ $e \succ_h \mu(h)$). 
\item 
If $h \not\in S$, then 
$e \succsim_h \mu(h)$ (resp.\ $e \succ_h \mu(h)$). 
\end{itemize}
For each edge $e = (d,h) \in E \setminus \mu$, 
we say that \emph{$e$ blocks $\mu$} if 
the following conditions are satisfied. 
\begin{itemize}
\item
$e$ weakly blocks $\mu$ on both $d$ and $h$. 
\item
$e$ strongly blocks $\mu$ on at least one of $d$ and $h$.
\end{itemize}
Then a matching $\mu$ is said to be \emph{stable} if
any edge $e \in E \setminus \mu$
does not block $\mu$.  

It is known that 
a stable matching does not necessarily exist
even when $S = \emptyset$~\cite{Irving94}. 
Thus, the goal of 
the strongly stable matching problem with closures
is to determine
whether there exists a stable matching in $G$ and to find one 
if exists. 

Here we explain how we describe the preference of a vertex.
Let $d$ be a doctor in $D$, and we assume that 
\begin{equation*}
E(d) = \{(d,h_1), (d,h_2), (d,h_3)\}, \ \ \  
(d,h_1) \succ_d (d,h_2) \sim_d (d,h_3).
\end{equation*}
In this case, we describe the preference of $d$ as 
$d \colon h_1 \succ h_2 \sim h_3$.
We describe the preference of a hospital in the same way. 

\section{Hardness} 

In this section, we prove the following hardness result. 
We define the \emph{decision version} of the strongly stable matching problem with closures 
as the problem of determining 
whether there exists a stable matching in $G$. 

\begin{theorem} \label{theorem:hardness}
The decision version of the strongly stable matching problem with closures is ${\rm NP}$-complete 
even when the following conditions are satisfied. 
\begin{itemize}
\item
For every doctor $d \in D$ and 
every pair of edges $(d,h),(d,h^{\prime}) \in E(d)$ such that 
$h \in S$ and $h^{\prime} \notin S$, 
we have $(d,h) \succ_d (d,h^{\prime})$.  
\item
For every vertex $v \in D \cup H$, we have 
$|E(v)| \le 3$. 
\end{itemize}
\end{theorem}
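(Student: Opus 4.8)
The plan is to prove NP-hardness by reduction from a known NP-complete problem; membership in NP is immediate, since given a matching $\mu$ one can verify in polynomial time that no edge blocks it by checking each edge against the blocking definition. For the hardness direction I would look for a problem whose structure naturally encodes the tension introduced by the closure set $S$: a hospital $h \in S$ exerts no pressure when empty, but once it is matched it behaves like an ordinary hospital. This ``on/off'' behavior of $S$-hospitals, combined with ties on the hospital side, is what I expect to drive the reduction. Given the degree bound $|E(v)| \le 3$ and the ``$S$-hospitals are strictly preferred'' condition, I would reduce from a restricted satisfiability or graph problem---a natural candidate is a variant of \textsc{$3$-SAT} or a vertex-cover/independent-set style problem on bounded-degree graphs, since these give the required $|E(v)| \le 3$ bound for free.

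**Gadget design.**
The core of the construction is to build variable gadgets and clause gadgets out of doctors and hospitals. For a Boolean variable I would create a small cyclic gadget of doctors and hospitals with ties arranged so that a stable matching is forced into one of exactly two configurations, encoding \texttt{true}/\texttt{false}. The key trick is to use an $S$-hospital whose closure escapes blocking: I want to arrange preferences so that a doctor strictly prefers some $S$-hospital $h$ to its $\notin S$ alternatives (respecting the first bulleted condition of the theorem), so that if $h$ is left empty no blocking pair arises, but if the truth assignment is ``wrong'' some edge into a non-$S$ hospital is forced to strongly block. Clause gadgets would then be consistency checkers: a clause is satisfiable exactly when the incident variable gadgets admit a locally stable configuration, and unsatisfiability forces an unavoidable blocking edge. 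Ties in the hospital preferences are essential here, because strong stability's asymmetry (an edge blocks if it weakly blocks both endpoints and strongly blocks at least one) is precisely what lets a ``tie'' at a hospital create or forbid blocking depending on the doctor side.

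**Correctness and the main obstacle.**
I would prove correctness by establishing the two directions: (i) from a satisfying assignment, construct a stable matching by matching each gadget according to its truth value and verifying no edge blocks; (ii) from a stable matching, read off a truth assignment from the variable gadgets and argue that clause gadgets being unblocked forces every clause to be satisfied. The main obstacle I anticipate is the \emph{local-to-global} argument: I must ensure that the only stable matchings are the ``intended'' gadget-consistent ones, ruling out spurious stable matchings that mix configurations across gadgets or exploit the closure of $S$-hospitals in unintended ways. Respecting both theorem constraints simultaneously---the degree bound $|E(v)| \le 3$ and the requirement that $S$-hospitals be strictly preferred over non-$S$ ones at every doctor---will tightly constrain the gadgets, so the delicate part is verifying that each gadget still has enough room to enforce binary choices while staying within degree $3$. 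I expect the bulk of the work to lie in exhaustively checking the blocking conditions within and between adjacent gadgets to confirm that no unintended stable matching survives.
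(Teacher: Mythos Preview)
Your proposal is correct and follows essentially the same approach as the paper: a reduction from a bounded-occurrence variant of \textsc{3-SAT} (specifically $(3,\mathrm{B}2)$-\textsc{sat}, where each literal appears exactly twice), with variable gadgets that force a binary choice via $S$-hospitals and clause gadgets that force at least one literal per clause to be satisfied. The concrete gadgets the paper builds, and the exhaustive case-check of blocking edges within and between gadgets that you anticipate, are indeed where all the work lies.
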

\begin{proof}
Since we can determine whether a given matching in $G$ is stable in polynomial time, 
the decision version of 
the strongly stable matching problem with closures
is in NP. 

We prove the NP-hardness of the decision version of 
the strongly stable matching problem with closures 
by reduction from 
$(3, {\rm B}2)$-{\sc sat} defined as follows. 
In $(3, {\rm B}2)$-{\sc sat}, we are given a set ${\sf V}$ of 
$n$ variables 
$\alpha_1,\alpha_2,\ldots,\alpha_n$ and 
$m$ clauses $C_1,C_2,\ldots,C_m$. 
Define $\neg {\sf V}$ as the set of negations 
$\neg \alpha_1, \neg\alpha_2,\ldots,\neg\alpha_n$ 
of the variables in ${\sf V}$.
Each clause is a subset of 
${\sf V} \cup \neg {\sf V}$ that contains exactly three elements.
Furthermore, 
for every variable $\alpha$, 
the number of clauses containing $\alpha$ is exactly two, and 
the number of clauses containing $\neg \alpha$ is exactly two.
Then the goal of $(3, {\rm B}2)$-{\sc sat} is to determine whether 
there exists a function $\phi \colon {\sf V} \to \{0,1\}$ such that, 
for every integer $t \in [m]$, at least one element in $C_t$ 
is $1$ under $\phi$, where we define the 
negation of $1$ (resp.\ $0$) as 
$0$ (resp.\ $1$).
It is known that 
$(3, {\rm B}2)$-{\sc sat} is {\rm NP}-complete~\cite[Theorem~1]{BermanKS03}. 

Assume that we are given an instance of 
$(3, {\rm B}2)$-{\sc sat}. 
For each integer $t \in [m]$, we arbitrarily arrange 
the elements in $C_t$ and 
we define $\ell_{t,j}$ as the $j$th element in $C_t$.
Then we construct an instance of the strongly 
stable matching problem with closures as follows. 

\begin{figure*}[t]
\begin{minipage}{0.25\hsize}
\centering
\includegraphics[width=3cm]{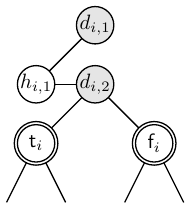}
\par(a)
\end{minipage}
\begin{minipage}{0.3\hsize}
\centering
\includegraphics[width=5cm]{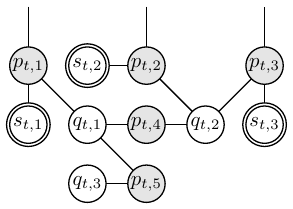}
\par(b)
\end{minipage}
\begin{minipage}{0.43\hsize}
\centering
\includegraphics[width=5.5cm]{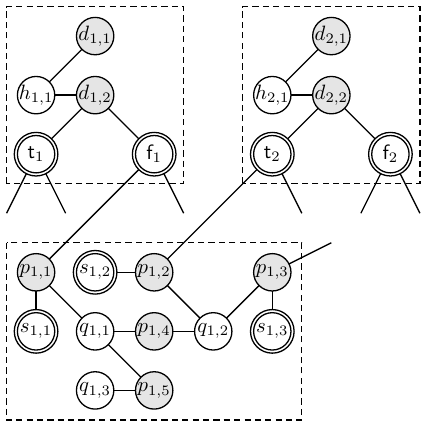}
\par(c)
\end{minipage}
\caption{(a) The gadget for the variable $\alpha_i$.
(b) The gadget for the clause $C_t$.
(c) $\ell_{1,1} = \neg \alpha_1$ and $\ell_{1,2} = \alpha_2$.}
\label{fig:gadget}
\end{figure*}

For each variable $\alpha_i$, we prepare the gadget 
described in Figure~\ref{fig:gadget}(a). 
In this paper, gray (resp.\ white) vertices represent doctors 
in $D$ (resp.\ hospitals in $H$).
Furthermore, the vertices described by double circles are 
hospitals in $S$. 
That is, in the gadget in Figure~\ref{fig:gadget}(a), 
${\sf t}_i, {\sf f}_i \in S$ and 
$h_{i,1} \notin S$.
The hospitals ${\sf t}_i, {\sf f}_i$ are connected to 
some vertices in the gadgets corresponding to the clauses containing 
literals of $\alpha_i$.  
The preferences of $h_{i,1}, d_{i,2}$ are defined as 
\begin{equation*}
h_{i,1} \colon d_{i,1} \sim d_{i,2}, \ \ \  
d_{i,2} \colon {\sf t}_i \sim {\sf f}_i \succ h_{i,1}.
\end{equation*}

For each clause $C_t$, we prepare the gadget 
illustrated in Figure~\ref{fig:gadget}(b). 
That is, $s_{t,1}, s_{t,2}, s_{t,3} \in S$ and 
$q_{t,1}, q_{t,2}, q_{t,3} \notin S$.
The doctors $p_{t,1}, p_{t,2}, p_{t,3}$ are connected to 
some vertices in the gadgets corresponding to variables in $C_t$.
The preferences of $q_{t,1}, q_{t,2}, p_{t,4}, p_{t,5}$ are defined as 
\begin{equation*}
q_{t,1} \colon p_{t,1} \sim p_{t,4} \sim p_{t,5}, \ \ \ 
q_{t,2} \colon p_{t,2} \sim p_{t,3} \sim p_{t,4},\ \ \
 p_{t,4} \colon q_{t,1} \sim q_{t,2}, \ \ \ 
p_{t,5} \colon q_{t,1} \sim q_{t,3}. 
\end{equation*}

Next, we consider the edges between 
the gadgets for the variables and the gadgets for the clauses. 
For each pair of integers $t \in [m]$ and $j \in \{1,2,3\}$, we define 
$q^{\ast}_{t,j}$ as follows. 
\begin{equation*}
q^{\ast}_{t,j} := 
\begin{cases}
q_{t,1} & \mbox{if $j = 1$}\\
q_{t,2} & \mbox{if $j \in \{2,3\}$}.
\end{cases} 
\end{equation*} 
If 
$\ell_{t,j} = \ell_{u,k} = \alpha_i$ and 
$(t,j) \neq (u,k)$, then 
${\sf t}_i$ is connected to 
$p_{t,j}$ and $p_{u,k}$, and
the preferences of 
${\sf t}_i, p_{t,j}, p_{u,k}$ as
\begin{equation*}
{\sf t}_i \colon p_{t,j} \sim p_{u,k} \succ d_{i,2},\ \ \ 
p_{t,j} \colon s_{t,j} \succ {\sf t}_i \succ q^{\ast}_{t,j}, \ \ \  
p_{u,k} \colon s_{u,k} \succ {\sf t}_i \succ q^{\ast}_{u,k}.
\end{equation*} 
Furthermore, if 
$\ell_{t,j} = \ell_{u,k} = \neg \alpha_i$ and 
$(t,j) \neq (u,k)$, then 
${\sf f}_i$ is connected to 
$p_{t,j}$ and $p_{u,k}$, and
the preferences of 
${\sf f}_i, p_{t,j}, p_{u,k}$ as
\begin{equation*}
{\sf f}_i \colon p_{t,j} \sim p_{u,k} \succ d_{i,2},\ \ \ 
p_{t,j} \colon s_{t,j} \succ {\sf f}_i \succ q^{\ast}_{t,j}, \ \ \ 
p_{u,k} \colon s_{u,k} \succ {\sf f}_i \succ q^{\ast}_{u,k}. 
\end{equation*} 
See Figure~\ref{fig:gadget}(c) for an example. 

In what follows, we prove that 
there exists a stable matching 
in the instance defined in the above way if and only if 
there exists a feasible solution to the given instance of 
$(3, {\rm B}2)$-{\sc sat}. 

First, we assume that there exists a stable matching $\mu$ in $G$.

\begin{lemma} \label{lemma:hardness_1}
For every integer $i \in [n]$, 
exactly one of $(d_{i,2},{\sf t}_i)$ and 
$(d_{i,2},{\sf f}_i)$ is contained in $\mu$. 
\end{lemma}
\begin{proof}
Let $i$ be an integer in $[n]$. 
Then the definition of a matching implies that 
at most one of $(d_{i,2},{\sf t}_i)$ and 
$(d_{i,2},{\sf f}_i)$ is contained in $\mu$. 
Thus, we assume that 
$(d_{i,2},{\sf t}_i) \notin \mu$ and 
$(d_{i,2}, {\sf f}_i) \notin \mu$.
If $\mu(d_{i,2}) = \emptyset$, then 
$(d_{i,2},h_{i,1})$ blocks $\mu$. 
Furthermore, if $\mu(d_{i,2}) = (d_{i,2},h_{i,1})$, then 
$(d_{i,1},h_{i,1})$ blocks $\mu$. 
Thus, in both cases, the assumption 
contradicts the fact that 
$\mu$ is stable. 
This implies that exactly one of 
$(d_{i,2},{\sf t}_i) \in \mu$ and 
$(d_{i,2}, {\sf f}_i) \in \mu$
holds. 
This completes the proof. 
\end{proof} 

We construct a solution $\phi$ to the given instance of 
$(3, {\rm B}2)$-{\sc sat} as follows.
For each integer $i \in [n]$, 
if $(d_{i,2}, {\sf t}_i) \in \mu$
(resp.\ $(d_{i,2}, {\sf f}_i) \in \mu$), 
then 
we define $\phi(\alpha_i) := 0$
(resp.\ $\phi(\alpha_i) := 1$).
Notice that 
Lemma~\ref{lemma:hardness_1}
implies that 
$\phi$ is well-defined. 
Thus, what remains is to prove that 
$\phi$ constructed in this way is a feasible solution 
to the given instance of 
$(3, {\rm B}2)$-{\sc sat}. 

\begin{lemma} \label{lemma:hardness_2} 
For every triple of integers $i \in [n]$, 
$t \in [m]$, 
and $j \in \{1,2,3\}$, 
if $(p_{t,j},{\sf t}_i) \in E$ and 
$(p_{t,j},q^{\ast}_{t,j}) \in \mu$
{\rm (}resp.\ 
$(p_{t,j},{\sf f}_i) \in E$ and 
$(p_{t,j},q^{\ast}_{t,j}) \in \mu${\rm )}, then 
$(d_{i,2},{\sf f}_i) \in \mu$
{\rm (}resp.\ $(d_{i,2},{\sf t}_i) \in \mu${\rm )}. 
\end{lemma}
\begin{proof} 
For every triple of integers $i \in [n]$, 
$t \in [m]$, 
and $j \in \{1,2,3\}$, 
if $(p_{t,j},{\sf t}_i) \in E$,
$(p_{t,j},q^{\ast}_{t,j}) \in \mu$,
and $(d_{i,2},{\sf t}_i) \in \mu$, then 
$(p_{t,j},{\sf t}_i)$ blocks $\mu$. 
However, this contradicts the fact that 
$\mu$ is stable. 
The remaining statement can be proved in the same way.
This completes the proof. 
\end{proof} 

\begin{lemma} \label{lemma:hardness_3}
For every integer $t \in [m]$, 
exactly one of 
$(p_{t,1},q_{t,1}), (p_{t,2},q_{t,2}), (p_{t,3},q_{t,2})$
is contained in $\mu$. 
\end{lemma}
\begin{proof}
The definition of the preference of $p_{t,5}$ implies that 
$(p_{t,5},q_{t,3}) \in \mu$. Thus, 
the definition of the preference of $q_{t,1}$ implies that
exactly one of $(p_{t,1},q_{t,1})$ and $(p_{t,4},q_{t,1})$
is contained in $\mu$. 
If $(p_{t,1},q_{t,1}) \in \mu$, then 
the definition of the preference of $q_{t,1}$ implies that 
$(p_{t,4},q_{t,2}) \in \mu$, and 
the proof is done. 
If $(p_{t,4},q_{t,1}) \in \mu$, then 
the definition of the preference of $p_{t,4}$ implies that
exactly one of $(p_{t,2},q_{t,2})$ and $(p_{t,3},q_{t,2})$
is contained in $\mu$. 
This completes the proof. 
\end{proof} 

It follows from Lemma~\ref{lemma:hardness_2} that, 
for every integer $t \in [m]$, 
if $(p_{t,j},q^{\ast}_{t,j}) \in \mu$, then  
$1$ is assigned to $\ell_{t,j}$ under $\phi$.  
This and 
Lemma~\ref{lemma:hardness_3} imply that 
$\phi$ is a feasible solution 
to the given instance of 
$(3, {\rm B}2)$-{\sc sat}.

Next we assume that 
there exists a feasible solution $\phi$ 
to the given instance of $(3, {\rm B}2)$-{\sc sat}. 
Then we construct the matching $\mu$ as follows. 

For each integer $i \in [n]$, if 
$\phi(\alpha_i) = 0$, then 
$\mu$ contains 
$(d_{i,1},h_{i,1})$ and $(d_{i,2},{\sf t}_i)$.
On the other hand, for each integer $i \in [n]$, if 
$\phi(\alpha_i) = 1$, then 
$\mu$ contains 
$(d_{i,1},h_{i,1})$ and $(d_{i,2},{\sf f}_i)$.

Since $\phi$ is a feasible solution to 
the given instance of $(3, {\rm B}2)$-{\sc sat}, 
for every integer $t \in [m]$, 
there exists an integer $j_t \in \{1,2,3\}$ such that 
$1$ is assigned to $\ell_{t,j_t}$ under $\phi$.
(If there exist more than one such an integer $j$, then 
we arbitrarily select one such an integer $j$.) 
For each integer $t \in [m]$,
if $j_t = 1$, then $\mu$ contains the edges described by bold lines in 
Figure~\ref{fig:construction_matching}(a). 
For each integer $t \in [m]$,
if $j_t = 2$, then $\mu$ contains the edges described by bold lines in 
Figure~\ref{fig:construction_matching}(b). 
Finally, for each integer $t \in [m]$,
if $j_t = 3$, then $\mu$ contains the edges described by bold lines in 
Figure~\ref{fig:construction_matching}(c). 
See Figure~\ref{fig:example} for an example.

\begin{figure*}[t]
\begin{minipage}{0.33\hsize}
\centering
\includegraphics[width=4cm]{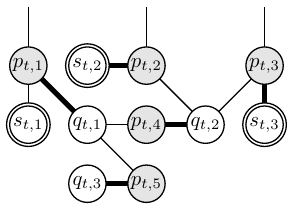}
\par(a)
\end{minipage}
\begin{minipage}{0.32\hsize}
\centering
\includegraphics[width=4cm]{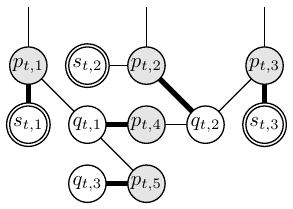}
\par(b)
\end{minipage}
\begin{minipage}{0.33\hsize}
\centering
\includegraphics[width=4cm]{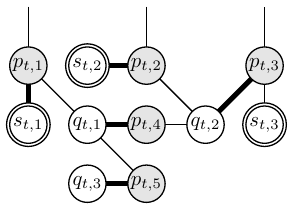}
\par(c)
\end{minipage}
\caption{(a) $j_t = 1$. 
(b) $j_t = 2$.
(c) $j_t = 3$.} 
\label{fig:construction_matching}
\end{figure*} 

\begin{figure*}[t]
\begin{minipage}{0.49\hsize}
\centering
\includegraphics[width=6cm]{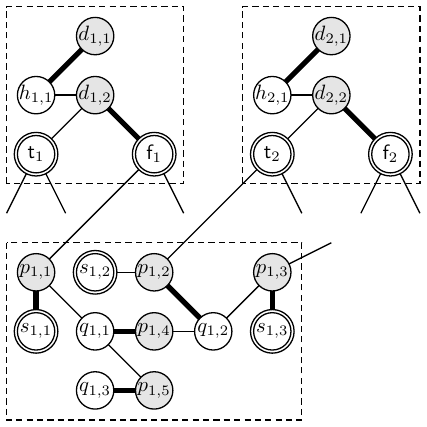}
\par(a)
\end{minipage}
\begin{minipage}{0.49\hsize}
\centering
\includegraphics[width=6cm]{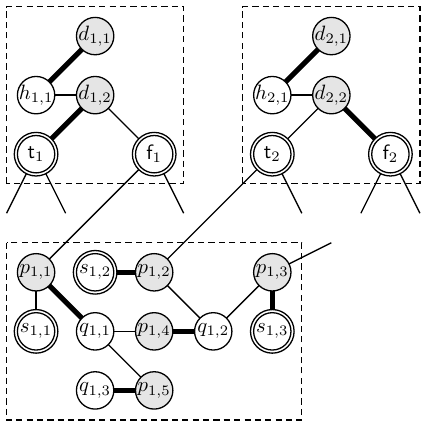}
\par(b)
\end{minipage}
\caption{(a) $\phi(\alpha_1) = 1$, $\phi(\alpha_2) = 1$, and $j_1 = 2$. 
(b) $\phi(\alpha_1) = 0$, $\phi(\alpha_2) = 1$, and $j_1 = 1$.}
\label{fig:example}
\end{figure*}

What remains is to prove that $\mu$ is stable. 
It is not difficult to see that 
there does not exist an edge 
in the gadgets that blocks $\mu$. 
Thus, we consider the edges between 
the gadgets. 
Let us consider the vertex $p_{t,j}$.
Assume that 
$p_{t,j}$ is connected to ${\sf t}_i$. 
Then we prove that 
$(p_{t,j},{\sf t}_i)$ does not block $\mu$.
If $(p_{t,j}, s_{t,j}) \in \mu$, then 
the preference of $p_{t,j}$ implies that 
$(p_{t,j},{\sf t}_i)$ does not block $\mu$.
Thus, we assume that $(p_{t,j},q^{\ast}_{t,j}) \in \mu$. 
In this case, 
the definition of $\mu$ implies that 
$(d_{i,2},{\sf f}_i) \in \mu$.
Thus, since ${\sf t}_i \in S$, 
$(p_{t,j},{\sf t}_i)$ does not block $\mu$.
In the same way, we can prove the case where 
$p_{t,j}$ is connected to ${\sf f}_i$.
This completes the proof. 
\end{proof}

\section{Useful Lemmas} 

In this section, we give lemmas that are needed to propose 
algorithms for special cases 
where the strongly stable matching problem with closures
can be solved in polynomial time. 

For each vertex $v \in D \cup H$ and each subset $F \subseteq E$, 
we define ${\rm Ch}_v(F)$ as the set of edges $e \in F(v)$ such that 
$e \succsim_v f$ for every edge $f \in F(v)$. 
For each subset $F \subseteq E$, 
we define 
\begin{equation*}
{\rm Ch}_D(F) := 
\bigcup_{d \in D}{\rm Ch}_d(F), \ \ \ 
{\rm Ch}_H(F) := 
\bigcup_{h \in H}{\rm Ch}_h(F), \ \ \ 
{\rm Ch}(F) := {\rm Ch}_H({\rm Ch}_D(F)). 
\end{equation*}

For each subset $F \subseteq E$, $F$ is said to be \emph{flat} 
if $e \sim_v f$
for every vertex $v \in D \cup H$ and 
every pair of edges $e,f \in F(v)$.
For example, for every subset $F \subseteq E$, 
${\rm Ch}(F)$ is flat. 

Let $F$ be a flat subset of $E$.
Let $e = (d,h)$ be 
an edge in $E \setminus F$ such that 
$F(h) \neq \emptyset$. 
Then 
we write $e \succ_d F$ if 
one of the following conditions are satisfied. 
\begin{itemize}
\item
$F(d) = \emptyset$.
\item
$F(d) \neq \emptyset$ and 
$e \succ_d f$ for an edge $f \in F(d)$.
\end{itemize}
Furthermore, we write $e \sim_d F$
if $F(d) \neq \emptyset$ and 
$e \sim_d f$  
for an edge $f \in F(d)$.
If at least one of 
$e \succ_d F$ and 
$e \sim_d F$, then 
we write $e \succsim_d F$. 
We write 
$e \succsim_h F$ (resp.\ $e \succ_h F$)
if 
$e \succsim_h f$ (resp.\ $e \succ_h f$)
for an edge $f \in F(h)$.

For each flat subset $F \subseteq E$, we define 
${\sf block}(F)$ as the set of edges $e = (d,h) \in E \setminus F$
such that the following conditions are satisfied.
\begin{itemize}
\item
$F(h) \neq \emptyset$.
\item
$e \succsim_d F$.
\item
If $e \succ_d F$, then $e \succsim_h F$. 
\item
If $e \sim_d F$, then $e \succ_h F$. 
\end{itemize}

In what follows, the following lemma plays an important role. 
Although the setting of this paper and the setting in \cite{Irving94,Manlove99} 
are different, 
this lemma can be proved by using an idea similar to the 
idea used in \cite{Irving94,Manlove99}. 
In Section~\ref{section:pre-process}, 
we prove Lemma~\ref{lemma:pre-process} by 
giving a polynomial-time algorithm for finding 
a subset $R \subseteq E$ and a matching $\mu$ in $G$
satisfying the conditions in Lemma~\ref{lemma:pre-process}. 

\begin{lemma} \label{lemma:pre-process}
For every instance of the strongly stable matching problem with 
closures, there exist a subset $R \subseteq E$ and a matching $\mu$ in $G$
satisfying the following 
conditions.
\begin{description}
\item[\bf (R1)]
For any edge $e \in R$, there does not exist a stable matching 
in $G$ containing $e$. 
\item[\bf (R2)] 
$\mu \subseteq {\rm Ch}(E \setminus R)$.
\item[\bf (R3)] 
$\mu(d) \neq \emptyset$ holds for every doctor $d \in D[E \setminus R]$.
\item[\bf (R4)] 
$R \cap {\sf block}({\rm Ch}(E \setminus R)) = \emptyset$. 
\item[\bf (R5)] 
For every doctor $d \in D$ and every pair of 
edges $e \in R(d)$ and $f \in E(d) \setminus R$, 
$e \succsim_d f$. 
\end{description}
\end{lemma}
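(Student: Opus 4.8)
The plan is to prove Lemma~\ref{lemma:pre-process} constructively, by adapting the classical strongly-stable-matching machinery of Irving and Manlove to the closure setting. I would run an iterative procedure that maintains a current set $R$ of edges deleted so far (edges provably in no stable matching) together with a candidate for the \emph{optimal} choice function on the surviving edges $E \setminus R$. The engine is a proposal/rejection phase followed by a critical-set phase: first I would repeatedly apply the choice operators, forming $\mathrm{Ch}(E \setminus R)$, and whenever some edge strictly dominates the currently-tentative assignment in a way that witnesses a blocking pair, delete the dominated edges from $E \setminus R$ by adding them to $R$. Because every deletion is justified by the existence of a strictly-preferred alternative that can always be realized, each deleted edge satisfies \textbf{(R1)}; this is the invariant I would carry through the whole construction and prove is preserved at every deletion step.

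The key steps, in order, are as follows. First I would establish the deletion invariant: show that if an edge $e=(d,h)$ is dominated at $d$ (i.e.\ $d$ receives, in $\mathrm{Ch}(E \setminus R)$, something it weakly prefers while $h$ is saturated by strictly-better edges), then $e$ lies in no stable matching, so adding it to $R$ preserves \textbf{(R1)}. The closure condition enters here: for $h \in S$ the blocking definition requires $\mu(h) \neq \emptyset$, so the domination argument must account for the possibility that a hospital in $S$ is deliberately left empty; I would handle this by treating an empty $h \in S$ as never inducing a block, and verify the deletion rule remains sound. Second, after the proposal phase stabilizes, I would extract from $\mathrm{Ch}(E \setminus R)$ a matching $\mu$ by solving a matching problem on the flat bipartite graph of tentative assignments (the standard ``critical set'' / Hall-type argument), which directly gives \textbf{(R2)} since $\mu \subseteq \mathrm{Ch}(E \setminus R)$ by construction. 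Third, using the critical-set characterization I would delete any further edges incident to doctors that cannot be matched within $\mathrm{Ch}(E \setminus R)$ (again justifying their addition to $R$ via \textbf{(R1)}), and iterate; at termination every doctor still present in $D[E \setminus R]$ is matched by $\mu$, giving \textbf{(R3)}.

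For \textbf{(R4)} and \textbf{(R5)} I would argue at termination. Condition \textbf{(R5)} is a monotonicity statement: a deleted edge at a doctor $d$ is never strictly worse than a surviving edge at $d$, which follows because the algorithm only ever deletes edges that are weakly dominated from above at $d$ by surviving (hence at-least-as-good) alternatives—so each individual deletion respects $e \succsim_d f$ for surviving $f$, and I would confirm this is preserved under the iteration. Condition \textbf{(R4)}, that no deleted edge weakly-then-strictly blocks the final $\mathrm{Ch}(E \setminus R)$, is the crux: it says the pruning was exhaustive enough that reinstating a deleted edge never creates a block against the optimal flat set. I would prove this by contradiction—if some $e \in R$ belonged to $\mathsf{block}(\mathrm{Ch}(E \setminus R))$, then at the stage $e$ was deleted the deletion was unjustified, contradicting the invariant that we only delete edges dominated by alternatives that survive to the end.

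The main obstacle I expect is the interaction between the closure set $S$ and the strong-stability blocking rules inside the deletion invariant. In the classical case an empty hospital can always be claimed, which is exactly what makes a dominated edge unstable; but a hospital $h \in S$ that is left empty cannot be claimed, so the usual ``this edge is never stable'' argument can fail precisely when the optimal choice set leaves $h \in S$ unsaturated. Getting the deletion rule to be simultaneously sound (never deleting an edge that some stable matching uses) and complete (deleting enough that \textbf{(R4)} holds at the end) in the presence of these closable hospitals is where the real care is needed, and I would spend most of the proof verifying that the definition of $\mathsf{block}$—which already builds in the $h \in S \Rightarrow \mu(h) \neq \emptyset$ clause—makes these two requirements compatible.
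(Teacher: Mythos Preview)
Your overall plan matches the paper's: an Irving--Manlove-style iterative procedure with a proposal/rejection phase (delete edges of $\mathrm{Ch}_D(E\setminus R)$ that are strictly dominated at their hospital) and a critical-set phase (find the minimal Hall violator in the flat graph and delete its edges), carrying the invariant that every deleted edge lies in no stable matching. Two points, however, need correction.

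First, your argument for \textbf{(R4)} has a genuine gap. You claim that if some $e\in R$ lay in $\mathsf{block}(\mathrm{Ch}(E\setminus R))$, then ``at the stage $e$ was deleted the deletion was unjustified.'' That does not follow: the deletion invariant you maintain is \textbf{(R1)} (the edge is in no stable matching), not that $e$ is forever dominated at its hospital by something in the final $\mathrm{Ch}(E\setminus R)$. An edge $e=(d,h)$ can be soundly deleted because some $e'\in \mathrm{Ch}_D(E\setminus R)$ with $e'\succ_h e$ exists \emph{at that moment}; later $e'$ may itself be deleted (e.g.\ in a critical-set step), leaving the terminal $L(h)$ with only edges that $e$ weakly dominates at $h$, so $e\in\mathsf{block}(L)$. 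Nothing in the inner loop prevents this. The paper handles it with an explicit \emph{outer} loop: after the inner loop stabilises, it tests whether ${\sf P}\cap\mathsf{block}(\mathrm{Ch}(E\setminus{\sf P}))\neq\emptyset$; if a witness $b=(d_t,h_t)$ exists, it deletes $\mathrm{Ch}(E\setminus{\sf P})\cap E(h_t)$ (these edges are shown to lie in no stable matching because $b$ would block any stable matching using them) and restarts the inner loop. Termination of the outer loop \emph{is} condition \textbf{(R4)}. You need this extra deletion step or an equivalent mechanism; it will not fall out of your contradiction argument.

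Second, your worry about the closure set $S$ is misplaced here. In every step where the proof of \textbf{(R1)} exhibits a blocking edge against a putative stable matching $\sigma$, the hospital involved is already matched in $\sigma$ (e.g.\ because the bad edge $f$ sits in $\sigma(h)$, or because $\sigma(h)\in L$), so the $h\in S\Rightarrow\mu(h)\neq\emptyset$ clause in the blocking definition is automatically satisfied. The algorithm and the verification of \textbf{(R1)}--\textbf{(R5)} are literally insensitive to $S$; the closures only matter later, when one asks whether the resulting $\mu$ is stable (Lemma~\ref{lemma:no_critical}) or whether a critical hospital obstructs stability.
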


\begin{lemma} \label{lemma:property_R}
Let $R$ and $\mu$ be 
a subset of $E$ and a matching in $G$
satisfying the conditions in Lemma~\ref{lemma:pre-process}, respectively. 
Define $L := {\rm Ch}(E \setminus R)$.
Then the following statements hold. 
\begin{description}
\item[(i)]
For every doctor $d \in D$ such that $\mu(d) = \emptyset$, 
we have $L(d) = \emptyset$. 
\item[(ii)]
For every doctor $d \in D$ such that $L(d) = \emptyset$, 
we have $E(d) \subseteq R$. 
\item[(iii)] 
For every edge $e = (d,h) \in E \setminus R$, 
we have $\mu(d) \neq \emptyset$. 
\end{description}
\end{lemma}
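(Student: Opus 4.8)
The plan is to prove the three statements of Lemma~\ref{lemma:property_R} in the order (i), (ii), (iii), since I expect (iii) to follow readily once (i) and (ii) are in place, and (i) to be the natural starting point because it connects the matching $\mu$ directly to the flat set $L$.

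For statement~(i), I would argue by contraposition: fix a doctor $d$ with $L(d) \neq \emptyset$ and show $\mu(d) \neq \emptyset$. Since $L = {\rm Ch}(E \setminus R) = {\rm Ch}_H({\rm Ch}_D(E \setminus R))$, the hypothesis $L(d) \neq \emptyset$ gives $(E \setminus R)(d) \neq \emptyset$, i.e.\ $d \in D[E \setminus R]$. Then condition~(R3) of Lemma~\ref{lemma:pre-process} yields $\mu(d) \neq \emptyset$ immediately. So the only real content is tracing through the definitions of the ${\rm Ch}$ operators to confirm that a nonempty $L(d)$ forces $d \in D[E \setminus R]$; this is where I would be careful, since passing from $L(d)$ (edges surviving \emph{both} the doctor-side and hospital-side choice) back to $(E\setminus R)(d)$ requires only that ${\rm Ch}_H$ and ${\rm Ch}_D$ are both subselections, so $L \subseteq {\rm Ch}_D(E \setminus R) \subseteq E \setminus R$, and membership of any edge $(d,h) \in L$ witnesses $d \in D[E\setminus R]$.

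For statement~(ii), I would again trace the choice operators, now to pin down exactly which doctors have $L(d) = \emptyset$. The key observation is that the doctor-side choice ${\rm Ch}_D(E \setminus R)$ selects, for each doctor $d$ with $(E \setminus R)(d) \neq \emptyset$, the top-ranked surviving edges, so ${\rm Ch}_D(E \setminus R)(d) = \emptyset$ if and only if $(E \setminus R)(d) = \emptyset$, i.e.\ $E(d) \subseteq R$. The subtlety is that $L(d) = \emptyset$ could in principle happen even when ${\rm Ch}_D(E \setminus R)(d) \neq \emptyset$, namely if every top edge of $d$ is discarded by the subsequent hospital-side choice ${\rm Ch}_H$. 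I therefore expect this to be the main obstacle: I must rule out that a doctor $d$ with $(E \setminus R)(d) \neq \emptyset$ nonetheless ends up with $L(d) = \emptyset$. The way to close this gap is to invoke (R3) together with (R2): if $(E \setminus R)(d) \neq \emptyset$ then $d \in D[E \setminus R]$, so by (R3) $\mu(d) \neq \emptyset$, and by (R2) $\mu(d) \in {\rm Ch}(E \setminus R) = L$, whence $L(d) \neq \emptyset$. Contrapositively, $L(d) = \emptyset$ forces $(E \setminus R)(d) = \emptyset$, which is precisely $E(d) \subseteq R$.

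For statement~(iii), fix an edge $e = (d,h) \in E \setminus R$. Then $(E \setminus R)(d) \neq \emptyset$, so $d \in D[E \setminus R]$, and condition~(R3) gives $\mu(d) \neq \emptyset$ directly. This is essentially the same observation already used above, so once the bookkeeping around $D[E\setminus R]$ and the ${\rm Ch}$ operators is established for (i), statements (ii) and (iii) reduce to short applications of (R2) and (R3). The whole lemma is thus a matter of carefully unwinding the definitions of ${\rm Ch}_D$, ${\rm Ch}_H$, ${\rm Ch}$, and $D[\cdot]$ and combining them with conditions (R2) and (R3); the only genuinely non-mechanical point is the interplay in (ii) between a doctor's top surviving edges and their possible elimination by the hospital-side choice, resolved by routing through $\mu$ via (R2)--(R3).
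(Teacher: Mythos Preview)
Your proposal is correct and follows essentially the same approach as the paper: all three parts reduce to unwinding the definitions of $L = {\rm Ch}(E\setminus R)$ and $D[E\setminus R]$ and invoking (R2)--(R3), with the key step in (ii) being that $\mu(d)$ witnesses $L(d)\neq\emptyset$ whenever $d\in D[E\setminus R]$. The only cosmetic difference is that the paper proves (i) directly (from $\mu(d)=\emptyset$ conclude $E(d)\setminus R=\emptyset$ via (R3), hence $L(d)=\emptyset$) whereas you argue the contrapositive; the content is identical.
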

\begin{proof}
First, we prove the statement (i).
Let $d$ be a doctor in $D$ such that 
$\mu(d) = \emptyset$. 
Then (R3) implies that 
$E(d) \setminus R = \emptyset$. 
Thus,  
since $L(d) \subseteq {\rm Ch}_d(E \setminus R) = \emptyset$, 
we have $L(d) = \emptyset$. 

Next, we prove the statement (ii).
Let $d$ be a doctor in $D$ such that 
$L(d) = \emptyset$. 
Assume that 
$E(d) \not\subseteq R$. 
Then (R3) implies that $\mu(d) \neq \emptyset$. 
Furthermore, (R2) implies that 
$\mu(d) \in L(d)$. 
This contradicts the fact that $L(d) = \emptyset$. 
This completes the proof. 

Finally, we prove the statement (iii).
Since $d \in D[E \setminus R]$, (R3) implies this 
statement. 
\end{proof} 

For each subset $R \subseteq E$, each matching $\mu$ in $G$ 
such that $\mu \subseteq {\rm Ch}(E \setminus R)$, and each hospital $h \in H \setminus S$, 
$h$ is called a \emph{critical hospital with respect to $(R, \mu)$} 
if $\mu(h) = \emptyset$ and 
at least one of 
${\rm Ch}(E \setminus R) \cap E(h) \neq \emptyset$
and $R(h) \neq \emptyset$ holds. 

\begin{lemma} \label{lemma:no_critical}
Let $R$ and $\mu$ be 
a subset of $E$ and a matching in $G$
satisfying the conditions in Lemma~\ref{lemma:pre-process}, respectively. 
If there does not exist a critical hospital with respect to $(R,\mu)$, then 
$\mu$ is a stable matching in $G$ 
such that, for every stable matching $\sigma$ in $G$
and every doctor $d \in D$, 
we have $\mu(d) \succsim_d \sigma(d)$.  
\end{lemma}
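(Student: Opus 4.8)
The plan is to establish the two assertions separately: first that $\mu$ is stable, then that $\mu$ is doctor-optimal among all stable matchings. Throughout I work with $L := {\rm Ch}(E \setminus R)$, using that $L$ is flat and that $\mu \subseteq L$ by \textbf{(R2)}. Two structural facts drive everything. Since $L = {\rm Ch}_H({\rm Ch}_D(E \setminus R))$, every edge of $L(d)$ lies in ${\rm Ch}_d(E \setminus R)$, so $\mu(d)$ (when nonempty) is a $\succsim_d$-best edge of $d$ among $E \setminus R$; dually, $\mu(h)$ (when nonempty) is $\succsim_h$-best among ${\rm Ch}_D(E \setminus R)$. The no-critical-hospital hypothesis will be invoked precisely to force $\mu(h) \neq \emptyset$ whenever $h \notin S$ has $L(h) \neq \emptyset$ or $R(h) \neq \emptyset$.

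For stability I take an arbitrary $e = (d,h) \in E \setminus \mu$, assume it blocks $\mu$, and split on whether $e \in E \setminus R$ or $e \in R$. If $e \in E \setminus R$, then Lemma~\ref{lemma:property_R}(i),(ii) rule out $\mu(d) = \emptyset$, so $\mu(d)$ is $\succsim_d$-best in $E \setminus R$; with the weak block $e \succsim_d \mu(d)$ this forces $e \sim_d \mu(d)$, so $e$ cannot strongly block on $d$ and must strongly block on $h$. The tie $e \sim_d \mu(d)$ also places $e$ in ${\rm Ch}_D(E \setminus R)$, whence $L(h) \neq \emptyset$. If $\mu(h) \neq \emptyset$, then $\mu(h)$ being $\succsim_h$-best in ${\rm Ch}_D(E \setminus R)$ contradicts $e \succ_h \mu(h)$; and $\mu(h) = \emptyset$ is possible only for $h \notin S$, where $L(h) \neq \emptyset$ makes $h$ critical, contradicting the hypothesis.

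For the case $e \in R$ I show $e \in {\sf block}(L)$, contradicting \textbf{(R4)}. Here $e \in E \setminus L$ is immediate from $L \subseteq E \setminus R$. Since $e \in R$ gives $R(h) \neq \emptyset$, the no-critical hypothesis forces $\mu(h) \neq \emptyset$ for $h \notin S$, and a weak block on $h$ forces it for $h \in S$; either way $\mu(h) \in L(h)$, so $L(h) \neq \emptyset$. The weak block $e \succsim_d \mu(d)$, together with $\mu(d) \in L(d)$ (or $L(d) = \emptyset$ via Lemma~\ref{lemma:property_R}(i)), gives $e \succsim_d L$, and flatness of $L$ yields the equivalences $e \succ_d L \Leftrightarrow e \succ_d \mu(d)$, $e \sim_d L \Leftrightarrow e \sim_d \mu(d)$, $e \succsim_h L \Leftrightarrow e \succsim_h \mu(h)$, and likewise for $\succ_h$. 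It then remains to match the two ${\sf block}$ implications to the blocking data: if $e \succ_d L$ then $e$ strongly blocks on $d$, so it weakly blocks on $h$, giving $e \succsim_h L$; if $e \sim_d L$ then $e$ does not strongly block on $d$, so it strongly blocks on $h$, giving $e \succ_h L$. Hence $e \in {\sf block}(L) \cap R$, the desired contradiction.

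Doctor-optimality is then short. For any stable matching $\sigma$, condition \textbf{(R1)} gives $\sigma \subseteq E \setminus R$. Fix $d$. If $\mu(d) \neq \emptyset$, then $\mu(d)$ is $\succsim_d$-best in $E \setminus R$, and $\sigma(d)$ is either $\emptyset$ or an edge of $E(d) \setminus R$, so $\mu(d) \succsim_d \sigma(d)$. If $\mu(d) = \emptyset$, then Lemma~\ref{lemma:property_R}(ii) gives $E(d) \subseteq R$, so $\sigma(d) \in E(d)$ would contradict $\sigma \subseteq E \setminus R$; hence $\sigma(d) = \emptyset$ and again $\mu(d) \succsim_d \sigma(d)$. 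I expect the main obstacle to be the stability half, and within it the faithful translation between \textquotedblleft $e$ blocks $\mu$\textquotedblright\ and \textquotedblleft $e \in {\sf block}(L)$\textquotedblright: aligning the weak/strong-block distinction on $d$ and $h$ with the $\succ_d L$/$\sim_d L$ dichotomy, and using flatness together with the no-critical hypothesis to dispose of the $\mu(h) = \emptyset$ boundary cases, is where the real work lies.
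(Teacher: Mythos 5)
Your proof is correct and follows essentially the same route as the paper's: the doctor-optimality half is identical (via \textbf{(R1)}, \textbf{(R2)}, \textbf{(R3)}), and the stability half reaches the same two contradictions (a critical hospital when $\mu(h)=\emptyset$ with $h\notin S$, and an edge in $R\cap{\sf block}(L)$ violating \textbf{(R4)}), merely reversing the order of the case split — you branch first on $e\in R$ versus $e\notin R$ and then on the preference comparison, while the paper branches first on $e\succ_d\mu(d)$ versus $e\sim_d\mu(d)$.
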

\begin{proof}
Assume that 
there does not exist a critical hospital with respect to $(R,\mu)$.

\begin{claim} \label{claim:claim_1:lemma:no_critical}
Assume that $\mu$ is stable.
Let $\sigma$ and $d$ be a stable matching in $G$ and 
a doctor in $D$, respectively. 
Then $\mu(d) \succsim_d \sigma(d)$. 
\end{claim}  
\begin{proof}
It follows from (R1) that 
$\sigma \subseteq E\setminus R$. 
If $\sigma(d) = \emptyset$, then 
$\mu(d) \succsim_d \sigma(d)$.
Thus, we consider the case where $\sigma(d) \neq \emptyset$. 
If $E(d) \subseteq R$, then 
$\sigma(d) = \emptyset$. Thus, 
$E(d) \not\subseteq R$.
In this case, (R2) and (R3) imply that 
$\mu(d) \in {\rm Ch}_d(E \setminus R)$. 
Thus, since $\sigma(d) \in E\setminus R$,  
$\mu(d) \succsim_d \sigma(d)$. 
\end{proof}

Claim~\ref{claim:claim_1:lemma:no_critical} 
implies that what remains is to prove that $\mu$ is stable. 
Define $L := {\rm Ch}(E \setminus R)$.

In what follows, we fix an edge $e = (d,h) \in E \setminus \mu$, and 
we prove that $e$ does not block $\mu$.
If $\mu(h) = \emptyset$ and $h \in S$, then 
$e$ does not block $\mu$.
Thus, when $\mu(h) = \emptyset$, 
it is sufficient to consider the case where 
$h \in H \setminus S$.  

First, we consider the case where $e \succ_d \mu(d)$. 
If $\mu(h) \succ_h e$, then $e$ does not block $\mu$.
Thus, 
we assume that $e \succsim_h \mu(h)$, and 
we derive a contradiction. 
We first prove that $e \in R$. 
If $e \notin R$, 
then since $e \in E(d) \setminus R$, 
(R3) implies that  
$\mu(d) \neq \emptyset$. 
Thus, (R2) implies that 
$\mu(d) \succsim_d e$.
However, this contradicts the assumption that 
$e \succ_d \mu(d)$.
Thus, $e \in R$.
Notice that since $e \in R$, 
we have $e \notin L$. 
We continue the proof of 
this case. 
\begin{itemize}
\item
Assume that $\mu(h) \neq \emptyset$. 
Then since $\mu \subseteq L$, $L(h) \neq \emptyset$. 
If $\mu(d) \neq \emptyset$, then 
since $\mu(d) \in L(d)$ and $e \succ_d \mu(d)$, 
we have $e \succ_d L$. 
If $\mu(d) = \emptyset$, then 
since Lemma~\ref{lemma:property_R} implies that 
$L(d) = \emptyset$, we have $e \succ_d L$. 
Thus, in both cases, $e \succ_d L$.
Furthermore, since $\mu(h) \in L$ and 
$e \succsim_h \mu(h)$, 
we have $e \succsim_h L$. 
This 
implies that 
$e \in R \cap {\sf block}(L)$.
However, this contradicts (R4).  
\item
Assume that 
$\mu(h) = \emptyset$. 
In this case, since $h \in H \setminus S$ and $e \in R(h)$, 
$e$ is a critical hospital with respect to $(R,\mu)$. 
This contradicts the assumption of this lemma.
\end{itemize}
Thus, in the case where 
$e \succ_d \mu(d)$, this lemma holds. 

Next, we consider the case where $e \sim_d \mu(d)$. 
If $\mu(h) \succsim_h e$, then $e$ does not block $\mu$.
Thus, we assume that 
$e \succ_h \mu(h)$, and 
we derive a contradiction.  
We first prove that
$e \in R$. 
If $e \notin R$, then 
since $e \sim_d \mu(d)$, 
$e \in {\rm Ch}_D(E \setminus R)$. 
Thus, since 
$e \in {\rm Ch}_D(E \setminus R) \cap E(h)$, 
${\rm Ch}(E \setminus R) \cap E(h) \neq \emptyset$. 
If $\mu(h) = \emptyset$, then 
$h$ is a critical hospital with respect to $(R,\mu)$. 
This contradicts the assumption of this lemma.
If $\mu(h) \neq \emptyset$, then 
since (R2) implies that 
$\mu(h) \in {\rm Ch}(E \setminus R)$, we have 
$\mu(h) \succsim_h e$. 
However, this contradicts the assumption that 
$e \succ_h \mu(h)$. 
Thus, $e \in R$.
Notice that since $e \in R$, 
we have $e \notin L$. 
We continue the proof of 
this case.
\begin{itemize}
\item
Assume that 
$\mu(h) \neq \emptyset$. 
Since $\mu \subseteq L$, we have $L(h) \neq \emptyset$. 
If $\mu(d) \neq \emptyset$, then 
since $\mu(d) \in L(d)$ and $e \sim_d \mu(d)$, 
we have $e \sim_d L$. 
If $\mu(d) = \emptyset$, then 
since $L(d) = \emptyset$, $e \succ_d L$.
Thus, in both cases, we have 
$e \succsim_d L$.
Furthermore, since $\mu(h) \in L$ and 
$e \succ_h \mu(h)$, 
we have $e \succ_h L$. 
This implies that 
$e \in R \cap {\sf block}(L)$.
However, this contradicts (R4).  
\item
Assume that 
$\mu(h) = \emptyset$. 
Since $h \in H \setminus S$
and $e \in R(h)$, 
$e$ is a critical hospital with respect to $(R,\mu)$. 
This contradicts 
the assumption of this lemma.
\end{itemize}
Thus, in the case where 
$e \sim_d \mu(d)$, this lemma holds. 
This completes the proof.
\end{proof}

\section{Separated Preferences} 
\label{section:separated} 

In this section, 
we consider an instance of 
the strongly stable matching problem with closures 
satisfying the following condition. 
\begin{description}
\item[($\star$)]
For every doctor $d \in D$ and 
every pair of edges $(d,h),(d,h^{\prime}) \in E(d)$ such that 
$h \in S$ and $h^{\prime} \notin S$, 
we have $(d,h^{\prime}) \succ_d (d,h)$.  
\end{description}
It should be noted that if we interchange the roles of $S$ and $H \setminus S$, then 
the problem becomes NP-complete (see Theorem~\ref{theorem:hardness}). 
This special case can be regarded as  
a common generalization of the strongly stable matching problem 
and the envy-free matching problem (see Section~\ref{section:envy}). 

Our algorithm is described in Algorithm~\ref{alg:separate}. 
Since we can find a subset $R \subseteq E$ and a matching $\mu$ in $G$
satisfying the conditions in Lemma~\ref{lemma:pre-process} in polynomial time 
(see Section~\ref{section:pre-process}), 
Algorithm~\ref{alg:separate} is a polynomial-time 
algorithm. 
\begin{algorithm}[h]
Find a subset $R \subseteq E$ and a matching $\mu$ in $G$
satisfying the conditions in Lemma~\ref{lemma:pre-process}.\\
\If{there exists a critical hospital with respect to $(R,\mu)$}
  {
     Output {\bf No}, and halt. 
  }
Output $\mu$, and halt. 
\caption{Algorithm under the assumption ($\star$)}
\label{alg:separate}
\end{algorithm}

\begin{lemma} \label{lemma:separate_No} 
If Algorithm~\ref{alg:separate} outputs {\bf No}, 
then there does not exist a stable matching in $G$.
\end{lemma}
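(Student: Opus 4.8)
The plan is to argue by contradiction: assume a stable matching $\sigma$ exists even though $(R,\mu)$ admits a critical hospital, and then produce an edge that blocks $\sigma$. The backbone is the doctor-optimality of $\mu$, which I would extract from Lemma~\ref{lemma:pre-process} \emph{without} assuming $\mu$ is itself stable. Concretely, (R1) forces $\sigma \subseteq E \setminus R$, and then the reasoning of Claim~\ref{claim:claim_1:lemma:no_critical} — which never actually uses stability of $\mu$ — shows $\mu(d) \succsim_d \sigma(d)$ for every doctor $d$: if $\sigma(d) \neq \emptyset$ then $E(d) \not\subseteq R$, so (R2) and (R3) place $\mu(d)$ in ${\rm Ch}_d(E \setminus R)$, which dominates $\sigma(d) \in E(d) \setminus R$.

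Let $h \in H \setminus S$ be a critical hospital, so $\mu(h) = \emptyset$ and either $L(h) \neq \emptyset$ or $R(h) \neq \emptyset$, where $L := {\rm Ch}(E \setminus R)$. I would first dispose of the easy case $\sigma(h) = \emptyset$ by exhibiting an edge $e = (d,h)$ incident to $h$ that blocks $\sigma$. In the subcase $L(h) \neq \emptyset$ I take $e \in L(h)$, note that $e$ and $\mu(d)$ both lie in ${\rm Ch}_d(E \setminus R)$ so $e \sim_d \mu(d)$, and conclude $e \succsim_d \sigma(d)$ by domination; in the subcase $R(h) \neq \emptyset$ I take $e \in R(h)$ and use (R5) to get $e \succsim_d \sigma(d)$. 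In either subcase $e \notin \sigma$, while $h \notin S$ and $\sigma(h) = \emptyset$ give $e \succ_h \emptyset = \sigma(h)$, so $e$ blocks $\sigma$, a contradiction. This is exactly where the criticality hypothesis is consumed: it guarantees an incident edge at $h$ usable as the blocker.

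The substantive case is $\sigma(h) \neq \emptyset$, which I handle with an alternating-path argument in $\mu \triangle \sigma$. Since $\mu(h) = \emptyset$ and $\sigma(h) \neq \emptyset$, $h$ is an endpoint of a path $h = h_0 \xrightarrow{\sigma} d_1 \xrightarrow{\mu} h_1 \xrightarrow{\sigma} d_2 \cdots d_k \xrightarrow{\mu} h_k$; a short degree count (using (R3) to see every doctor on the path is $\mu$-matched, hence internal) shows the other endpoint is a hospital $h_k$ with $\sigma(h_k) = \emptyset$ and $\mu(h_k) = (d_k,h_k)$. The natural candidate blocker is $\mu(d_k) = (d_k,h_k)$: it lies outside $\sigma$, weakly blocks on $d_k$ by domination, and strongly blocks on $h_k$ \emph{provided} $h_k \notin S$. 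The main obstacle is precisely that the terminal hospital might lie in $S$, in which case $\sigma(h_k) = \emptyset$ closes it and the edge fails to block. This is where assumption ($\star$) enters decisively: for each $i$, domination gives $(d_i,h_i) = \mu(d_i) \succsim_{d_i} \sigma(d_i) = (d_i,h_{i-1})$, and if $h_{i-1} \notin S$ while $h_i \in S$, then ($\star$) would force $(d_i,h_{i-1}) \succ_{d_i} (d_i,h_i)$, contradicting domination. Hence $h_{i-1} \notin S$ implies $h_i \notin S$, and since $h_0 = h \notin S$, induction yields $h_k \notin S$. Thus $(d_k,h_k)$ blocks $\sigma$, completing the contradiction. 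I expect the path bookkeeping (that doctors are internal and the far endpoint is an unmatched non-$S$ hospital) to be the only delicate point beyond this propagation step.
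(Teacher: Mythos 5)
Your proof is correct, but it reaches the contradiction by a genuinely different route than the paper's. Both arguments share the same skeleton: (R1) gives $\sigma \subseteq E \setminus R$, the doctor-domination $\mu(d) \succsim_d \sigma(d)$ is established exactly as in Claim~\ref{claim:claim_1:lemma:no_critical} (whose proof indeed never uses stability of $\mu$), and the final contradiction is an edge into a $\sigma$-unmatched hospital of $H \setminus S$ that dominates on the doctor side. They differ in how that hospital is located. The paper uses ($\star$) globally: it splits $D[E\setminus R]$ into the doctors $D^+$ whose surviving top choices lie in $H\setminus S$ and the rest, shows (Claim~\ref{claim:separate_No_1}) that every $\sigma$-matched hospital of $H\setminus S$ must be matched into $D^+$, and then counts --- the $|D^+|+1$ hospitals of $\Gamma_{\mu}(D^+) \cup \{h^{\ast}\}$ cannot all be $\sigma$-matched, and each carries a suitable edge (from $\mu$, from ${\rm Ch}(E\setminus R)\cap E(h^{\ast})$, or, via (R5), from $R(h^{\ast})$). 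You use ($\star$) locally: you walk the alternating path of $\mu \triangle \sigma$ out of the critical hospital, using (R3) to keep every doctor on the path internal, and propagate membership in $H\setminus S$ along the path by playing domination against ($\star$); the far endpoint is then the desired $\sigma$-unmatched non-$S$ hospital and $\mu$ supplies the blocking edge. Your easy case $\sigma(h^{\ast})=\emptyset$ coincides with the paper's treatment of the two criticality subcases, and your path bookkeeping checks out (each path doctor is $\mu$-matched by (R3) and its $\mu$-edge differs from its $\sigma$-edge because the previous hospital is $\mu$-unmatched or $\mu$-matched elsewhere, so the path must terminate at a $\mu$-matched, $\sigma$-unmatched hospital). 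The counting proof is shorter to write once the claim is in place; yours is more constructive, pinpointing the blocking edge, and in fact proves the slightly stronger local statement that under ($\star$) no stable matching can fill a hospital of $H\setminus S$ that $\mu$ leaves empty.
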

\begin{proof}
Since Algorithm~\ref{alg:separate} outputs {\bf No}, 
there exists a critical hospital $h^{\ast}$ with respect to $(R,\mu)$. 
The assumption ($\star$) implies that, for every doctor $d \in D[E \setminus R]$, 
exactly one of the following conditions is satisfied. 
\begin{description}
\item[(i)]
$h \in H \setminus S$ for every edge $(d,h) \in {\rm Ch}_d(E \setminus R)$. 
\item[(ii)]
$h \in S$ for every edge $(d,h) \in {\rm Ch}_d(E \setminus R)$. 
\end{description} 
Define $D^+$ (resp.\ $D^-$) as the set of doctors $d \in D[E\setminus R]$ 
satisfying the condition (i) (resp.\ (ii)).  
Then the assumption ($\star$) implies that, 
for every doctor $d \in D^-$ and every edge $(d,h) \in E \setminus R$, 
we have $h \in S$. 

Define $H^+ := \Gamma_{\mu}(D^+)$. 
Then (R2) and the definition of $D^+$ 
imply that 
$H^+ \subseteq H \setminus S$, and (R3) implies that 
$|H^+| = |D^+|$. 
Since
$\mu(h) \neq \emptyset$ holds for every hospital $h \in H^+$ 
and $\mu(h^{\ast}) = \emptyset$, 
we have $h^{\ast} \notin H^+$. 
Define $H^{\ast} := H^+ \cup \{h^{\ast}\}$. 
Notice that $|H^{\ast}| > |H^+|$. 

\begin{claim} \label{claim:separate_No_1}
Assume that there exists a 
stable matching $\sigma$ in $G$. 
Let $h$ be a hospital in $H \setminus S$ 
such that $\sigma(h) \neq \emptyset$. 
Furthermore, we assume that 
$\sigma(h) = (d,h)$.
Then we have $d \in D^+$. 
\end{claim}
\begin{proof}
Since (R1) implies that 
$\sigma \subseteq E \setminus R$, 
we have $(d,h) \in E \setminus R$, and 
thus $d \in D[E \setminus R]$.  
This implies that what remains is to prove that 
$d \notin D^-$. 
Recall that,  
for every doctor $d^{\prime} \in D^-$ and every 
edge $(d^{\prime},h^{\prime}) \in E \setminus R$, 
$h^{\prime} \in S$.
Thus, since $(d,h) \in E \setminus R$, 
if $d \in D^-$, then 
$h \in S$. 
However, this contradicts the fact that $h \in H \setminus S$.
Thus, $d \notin D^-$. 
This completes the proof. 
\end{proof} 

In what follows, we assume that there exists a stable matching $\sigma$ in $G$. 
Then we derive a contradiction. 
Notice that (R1) implies that $\sigma \subseteq E \setminus R$. 

First, we consider the case where
${\rm Ch}(E \setminus R) \cap E(h^{\ast}) \neq \emptyset$. 
Since $h^{\ast} \in H \setminus S$, 
$H^{\ast} \subseteq H \setminus S$.
Thus, 
since 
$|H^{\ast}| > |H^+| = |D^+|$, Claim~\ref{claim:separate_No_1}
implies that 
there exists a hospital $h \in H^{\ast}$ such that 
$\sigma(h) = \emptyset$. 
Define 
the edge $e$ as follows. 
If $h \in H^+$, then 
we define $e := \mu(h)$. 
If $h = h^{\ast}$, then 
we define $e$ as an 
edge 
in ${\rm Ch}(E \setminus R) \cap E(h^{\ast})$.
Assume that 
$e = (d, h)$. 
Then since $\sigma \subseteq E\setminus R$ and 
$e \in {\rm Ch}_d(E \setminus R)$, 
we have $e \succsim_d \sigma(d)$. 
Thus, since $\sigma(h) = \emptyset$ and $h \in H \setminus S$, 
$e$ blocks $\sigma$.
However, this contradicts the fact that $\sigma$ is stable. 

Next, we consider the case where
$R(h^{\ast}) \neq \emptyset$. 
Let $e^{\ast} = (d^{\ast},h^{\ast})$ be an edge in $R(h^{\ast})$. 
Then we first prove that 
$\sigma(h^{\ast}) \neq \emptyset$. 
Assume that 
$\sigma(h^{\ast}) = \emptyset$. 
Then since $\sigma \subseteq E \setminus R$, 
(R5) implies that 
$e^{\ast} \succsim_{d^{\ast}} \sigma(d^{\ast})$. 
Thus, $e^{\ast}$ blocks $\sigma$. 
However, this contradicts the fact that $\sigma$ is stable. 
Thus, this completes the proof of 
$\sigma(h^{\ast}) \neq \emptyset$. 
Assume that 
$\sigma(h^{\ast}) = (d^{\prime},h^{\ast})$. 
Since $h^{\ast} \in H \setminus S$, Claim~\ref{claim:separate_No_1}
implies that 
$d^{\prime} \in D^+$. 
Thus, since 
$H^{\ast} \subseteq H \setminus S$,  
$|H^{\ast}| > |H^+| = |D^+|$, and 
$\sigma(h^{\ast}) \neq \emptyset$, 
it follows from Claim~\ref{claim:separate_No_1}
that 
there exists a hospital $h \in H^+$ such that 
$\sigma(h) = \emptyset$. 
Define $e := \mu(h)$. 
Assume that 
$e = (d, h)$. 
Since $\sigma \subseteq E\setminus R$ and 
$e \in {\rm Ch}_d(E \setminus R)$, 
we have $e \succsim_d \sigma(d)$. 
Thus, since $\sigma(h) = \emptyset$
and $h \in H \setminus S$, 
$e$ blocks $\sigma$.
However, 
this contradicts the fact that $\sigma$ is stable. 
\end{proof} 

\begin{theorem} \label{theorem:separate} 
Algorithm~\ref{alg:separate} can correctly solve 
the strongly stable matching problem with closures under the assumption 
{\rm ($\star$)}.
\end{theorem}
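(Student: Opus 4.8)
The plan is to establish correctness by combining the two halves already assembled in the excerpt. Algorithm~\ref{alg:separate} has exactly two possible outputs: it either reports \textbf{No} (when a critical hospital with respect to $(R,\mu)$ exists), or it outputs the matching $\mu$ (when no such critical hospital exists). To prove the algorithm is correct, I would argue that each output faithfully reflects the true state of the instance. The soundness of the \textbf{No} output is precisely the content of Lemma~\ref{lemma:separate_No}: if a critical hospital exists, then no stable matching exists at all, so answering \textbf{No} is justified. This is the half that genuinely uses the separation assumption ($\star$), and I would simply invoke it.

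For the other branch, suppose no critical hospital with respect to $(R,\mu)$ exists. Then Lemma~\ref{lemma:no_critical} applies directly: its hypothesis is exactly the absence of a critical hospital, and its conclusion is that $\mu$ is a stable matching in $G$ (in fact a doctor-optimal one, with $\mu(d) \succsim_d \sigma(d)$ for every stable $\sigma$ and every doctor $d$). Hence the output $\mu$ is indeed a stable matching, so the algorithm's affirmative answer is correct, and moreover it returns a valid witness. Notice that Lemma~\ref{lemma:no_critical} makes no appeal to ($\star$); the separation assumption is needed only on the \textbf{No} side, which is the structurally delicate direction.

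Putting these together gives a clean case analysis. First I would record that $R$ and $\mu$ satisfy the conditions of Lemma~\ref{lemma:pre-process}, which is guaranteed by the first line of the algorithm. Then: if the algorithm outputs \textbf{No}, Lemma~\ref{lemma:separate_No} certifies that no stable matching exists, so \textbf{No} is the correct answer. Otherwise, there is no critical hospital with respect to $(R,\mu)$, so Lemma~\ref{lemma:no_critical} certifies that the output $\mu$ is a stable matching in $G$. In both cases the algorithm answers correctly, which proves the theorem. Since Section~\ref{section:pre-process} provides a polynomial-time procedure for the first line and checking for a critical hospital is clearly polynomial, the algorithm runs in polynomial time, as already noted before the statement of the algorithm.

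I do not expect any real obstacle here, because the theorem is essentially a packaging result: all the substantive work has been discharged in Lemmas~\ref{lemma:pre-process}, \ref{lemma:no_critical}, and \ref{lemma:separate_No}. The only thing to be careful about is that the two lemmas cover exactly the two branches of the algorithm with no gap and no overlap—which they do, since the branch condition ``there exists a critical hospital with respect to $(R,\mu)$'' is precisely the dividing line between the hypotheses of Lemma~\ref{lemma:separate_No} and Lemma~\ref{lemma:no_critical}. The one point worth stating explicitly is that the $\mu$ returned is genuinely a matching and genuinely stable (not merely that a stable matching exists), which follows because Lemma~\ref{lemma:no_critical} asserts stability of $\mu$ itself rather than of some other object.
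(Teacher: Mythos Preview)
Your proposal is correct and matches the paper's own proof essentially line for line: the paper also dispatches the theorem by invoking Lemma~\ref{lemma:no_critical} for the branch where $\mu$ is output and Lemma~\ref{lemma:separate_No} for the \textbf{No} branch. Your write-up is simply a more detailed unpacking of the same two-case argument.
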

\begin{proof}
If Algorithm~\ref{alg:separate} output a matching $\mu$ in $G$, 
then Lemma~\ref{lemma:no_critical} implies that 
$\mu$ is stable.
Thus, this theorem follows from Lemma~\ref{lemma:separate_No}.
\end{proof} 

\subsection{Strongly stable matchings and envy-free matchings}
\label{section:envy} 

In this section, we consider relations between the special case in Theorem~\ref{theorem:separate} 
and existing models. 

First, we consider the strongly stable matching problem
considered in \cite{Irving94,Manlove99}.
If $S = \emptyset$, then 
our problem is the same as the strongly stable matching problem.
Notice that if $S = \emptyset$, then 
the assumption ($\star$) clearly holds. 

Next, we consider the 
envy-free matching 
problem considered in \cite{GanSV19}. 
The envy-free matching problem is defined as follows. 
In this problem,  
we are given a finite simple undirected bipartite graph 
$G = (D, H; E)$. 
For each doctor 
$d \in D$, we are given 
a transitive binary relation 
$\gtrsim_d$ on $\Gamma_E(d)$
such that, for every pair of hospitals $h,h^{\prime} \in \Gamma_E(d)$, 
at least one of $h \gtrsim_d h^{\prime}$ and $h^{\prime} \gtrsim_d h$ holds.
For each doctor $d \in D$ and each pair of 
hospitals $h,h^{\prime} \in \Gamma_E(d)$, 
if $h \gtrsim_d h^{\prime}$ and $h^{\prime} \not\gtrsim_d h$, then
we write $h >_d h^{\prime}$. 
A matching $\mu$ in $G$ is said to be \emph{$D$-perfect}
if $|\mu| = |D|$. 
For each $D$-perfect matching $\mu$ in $G$ and 
each doctor $d \in D$, we define ${\sf p}_{\mu}(d)$ 
as the unique hospital $h \in H$ such that 
$(d,h) \in \mu$. 
A $D$-perfect matching $\mu$ in $G$ 
is said to be \emph{envy-free} if 
there does not exist a pair of distinct doctors 
$d,d^{\prime} \in D$ such that 
${\sf p}_{\mu}(d^{\prime}) >_d {\sf p}_{\mu}(d)$. 
The goal of the envy-free 
matching problem is to determine 
whether there exists a $D$-perfect envy-free matching in $G$ and to find one 
if exists.  
Gan, Suksompong, and Voudouris~\cite{GanSV19} proved that 
the envy-free matching problem can be solved in polynomial time. 

The envy-free matching problem can be reduced to our problem as follows. 
For each doctor $d \in D$,
we define $\succsim_d$ in such a way that, 
for each pair of edges 
$(d,h),(d,h^{\prime}) \in E(d)$, 
$(d,h) \succsim_d (d,h^{\prime})$ if and only if 
$h \gtrsim_d h^{\prime}$.
For each hospital $h \in H$, we define 
$\succsim_h$ in such a way that 
$e \sim_h f$ for 
every pair of edges $e, f \in E(h)$. 
Define $S := H$. 
Notice that if $S = H$, then 
the assumption ($\star$) clearly holds. 
Furthermore, in this setting, Algorithm~\ref{alg:separate} always outputs 
a matching $\mu$. 

It is not difficult to see that, 
for every $D$-perfect matching $\sigma$ in $G$ 
and every pair of doctors $d,d^{\prime} \in D$, 
${\sf p}_{\sigma}(d^{\prime}) >_d {\sf p}_{\sigma}(d)$
if and only if 
$(d,{\sf p}_{\sigma}(d^{\prime}))$ blocks $\sigma$
in the reduced instance. 
Thus, 
if there 
exists a $D$-perfect envy-free matching $\sigma$ in $G$, 
then 
$\sigma$ is also a $D$-perfect stable matching in $G$.
If the output $\mu$ of Algorithm~\ref{alg:separate} is not $D$-perfect, then 
Lemma~\ref{lemma:no_critical} implies that 
there exists a doctor $d \in D$ such that, for every stable matching 
$\sigma$ in $G$, we have $\sigma(d) = \emptyset$. 
This implies that 
if the output $\mu$ of Algorithm~\ref{alg:separate} is not $D$-perfect,
then 
there does not exist a $D$-perfect stable matching in $G$, i.e., 
there does not exist a $D$-perfect envy-free matching in $G$. 
In addition, 
if $\mu$ is $D$-perfect, then 
since $\mu$ is stable, 
$\mu$ is also a $D$-perfect envy-free matching in $G$.

\section{Bounded-Degree Instances} 
\label{section:bounded_degree} 

In this section, 
we consider an instance of 
the strongly stable matching problem with closures 
such that 
$|E(d)| \le 2$ holds for every doctor $d \in D$. 
The goal of this section is to prove that
this special case can be solved in polynomial time. 

First, we find a subset $R \subseteq E$ and a matching $\mu$ in $G$
satisfying the conditions in Lemma~\ref{lemma:pre-process}.
If there does not exist a critical hospital 
with respect to $(R,\mu)$, then Lemma~\ref{lemma:no_critical} 
implies that $\mu$ is a stable matching in $G$. 
Thus, in what follows, we assume that 
there exists a critical hospital 
with respect to $(R,\mu)$. 
Define $L := {\rm Ch}(E \setminus R)$. 
Notice that (R2) implies that $\mu \subseteq L$. 

Define the undirected bipartite graph 
$G^{\ast}$ by $G^{\ast} := (D,H;L)$. 
For each subset $X \subseteq D \cup H$, 
$X$ is called a \emph{connected component} in $G^{\ast}$
if $X$
is an inclusion-wise maximal subset $X \subseteq D \cup H$ such that 
$G^{\ast}\langle X \rangle$ is connected.
(Recall that 
$G^{\ast}\langle X \rangle$ is 
the subgraph of $G^{\ast}$ induced by $X$.) 
For each connected component $X$ in $G^{\ast}$, 
we define $D_X := D \cap X$ and 
$H_X := H \cap X$. 
We define $\mathcal{C}$ as the set of 
connected components $X$ in $G^{\ast}$ such that 
$|X| \ge 2$. 

\begin{lemma} \label{lemma:size_component}
For 
every element $X \in \mathcal{C}$, 
we have $|D_X| \le |H_X|$. 
\end{lemma}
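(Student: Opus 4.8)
The plan is to build an explicit injection from $D_X$ into $H_X$ using the matching $\mu$, so that $|D_X| \le |H_X|$ follows immediately. The only facts I need are that $L = {\rm Ch}(E \setminus R) \subseteq E \setminus R$ (the choice operator returns a subset of its argument), that $\mu \subseteq L$ by (R2), and that (R3) matches every doctor in $D[E \setminus R]$.

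First I would record the connectivity consequence. Since $X \in \mathcal{C}$, the induced graph $G^{\ast}\langle X \rangle$ is connected and has at least two vertices, so every doctor $d \in D_X$ has degree at least one in $G^{\ast}\langle X \rangle$; that is, there is an edge $(d,h) \in L$ whose other endpoint $h$ lies in the same connected component and hence in $H_X$. In particular $L(d) \neq \emptyset$, and because $L \subseteq E \setminus R$ this gives $d \in D[E \setminus R]$, so (R3) yields $\mu(d) \neq \emptyset$.

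Next I would send each $d \in D_X$ to its partner. Writing $\mu(d) = (d,h_d)$, the inclusion $\mu \subseteq L$ means $(d,h_d)$ is an edge of $G^{\ast}$, so $h_d$ is adjacent to $d$ in $G^{\ast}$ and therefore belongs to the connected component $X$, i.e.\ $h_d \in H_X$. Since $\mu$ is a matching, the assignment $d \mapsto h_d$ is injective from $D_X$ into $H_X$, which is exactly the desired inequality $|D_X| \le |H_X|$.

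I expect there to be no substantial obstacle: the argument is a counting argument on the fixed matching $\mu$, and the two points requiring care are both handled by the lemmas already in hand—namely that each $d \in D_X$ is actually matched by $\mu$ (connectivity together with (R3)) and that its partner does not escape $X$ (which is forced by $\mu \subseteq L$). Notably, the bounded-degree hypothesis $|E(d)| \le 2$ of this section is not needed for this particular claim; it will only become relevant in the subsequent analysis of the components in $\mathcal{C}$.
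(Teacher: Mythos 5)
Your proof is correct and follows essentially the same route as the paper: each doctor in $D_X$ is matched by $\mu$ (the paper cites Lemma~\ref{lemma:property_R}, which itself rests on (R3) exactly as you argue), and $\mu \subseteq L$ forces the partner to stay in $X$, so $\mu$ restricted to $D_X$ gives the injection. No issues.
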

\begin{proof}
Let $X$ be an element in $\mathcal{C}$. 
Then Lemma~\ref{lemma:property_R} implies that, for every doctor $d \in D_X$, 
since $L(d) \neq \emptyset$, 
we have $\mu(d) \neq \emptyset$.
Since $\mu \subseteq L$,  
$\Gamma_{\mu}(D_X) \subseteq H_X$. 
Thus, since $\mu$ is a matching in $G$, 
$|D_X| \le |H_X|$. 
This completes the proof. 
\end{proof} 

Define $\mathcal{P}$ (resp.\ $\mathcal{Q}$) as the set of 
elements $X \in \mathcal{C}$ such that 
$|D_X| = |H_X|$
(resp.\ $|D_X| < |H_X|$).
Then Lemma~\ref{lemma:size_component} implies that 
$\mathcal{C} = \mathcal{P} \cup \mathcal{Q}$. 
Define $\mathcal{R}$ as the set of 
connected components $X$ in $G^{\ast}$ such that 
$X$ consists of a single hospital $h \in H \setminus S$ and 
$R(h) \neq \emptyset$.

For each element $X \in \mathcal{P} \cup \mathcal{Q}$, 
we define the undirected graph $G_X$ as follows. 
The vertex set of $G_X$ is $H_X$. 
For each doctor $d \in D_X$ such that 
$|L(d)| = 2$, 
$G_X$ contains an edge $e_d$ connecting 
$h$ and $h^{\prime}$, where we assume that 
$L(d) = \{(d,h),(d,h^{\prime})\}$.
(Notice that $G_X$ may contain parallel edges.) 
Then 
it is not difficult to see that, 
for every element $X \in \mathcal{P} \cup \mathcal{Q}$, 
since $X$ is a connected component in $G^{\ast}$, 
$G_X$ is connected. 
Thus, for every element $X \in \mathcal{P} \cup \mathcal{Q}$, 
since $G_X$ contains a spanning tree as a subgraph, 
\begin{equation} \label{eq:family}
|H_X| - 1 \le |D_X| - |{\sf Leaf}_L(D_X)|. 
\end{equation}
where we define ${\sf Leaf}_L(D_X) := \{d \in D_X \mid |L(d)| = 1\}$. 
Notice that the right-hand side of 
\eqref{eq:family} is the number of edges of $G_X$. 

\begin{lemma} \label{lemma:family_p}
For every element $X \in \mathcal{P}$, 
$|{\sf Leaf}_L(D_X)| \le 1$.
\end{lemma}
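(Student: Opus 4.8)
The plan is to read off the bound directly from inequality \eqref{eq:family}, which has already been established for every $X \in \mathcal{P} \cup \mathcal{Q}$ via the spanning-tree argument on $G_X$, and then to feed in the defining equality of $\mathcal{P}$. So the first step is simply to recall that for $X \in \mathcal{P}$ we have $|D_X| = |H_X|$ by definition, and that \eqref{eq:family} applies to this $X$.

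Concretely, I would substitute $|H_X| = |D_X|$ into the left-hand side of \eqref{eq:family}. This turns the inequality $|H_X| - 1 \le |D_X| - |{\sf Leaf}_L(D_X)|$ into $|D_X| - 1 \le |D_X| - |{\sf Leaf}_L(D_X)|$. Cancelling $|D_X|$ from both sides yields $-1 \le -|{\sf Leaf}_L(D_X)|$, i.e.\ $|{\sf Leaf}_L(D_X)| \le 1$, which is exactly the claim.

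The point worth emphasising is that the genuine content of this lemma lives entirely in the material preceding the statement: that $G_X$ is connected (so it admits a spanning tree and hence has at least $|H_X| - 1$ edges), and that the number of edges of $G_X$ equals $|D_X| - |{\sf Leaf}_L(D_X)|$, since each doctor $d \in D_X$ with $|L(d)| = 2$ contributes exactly one edge $e_d$ while each leaf-doctor with $|L(d)| = 1$ contributes none. Granting those facts, the lemma is an immediate arithmetic consequence, so I do not anticipate any real obstacle; the only thing to watch is the direction of the inequality and the sign bookkeeping during the cancellation. Intuitively, $\mathcal{P}$ is the ``tight'' case $|D_X| = |H_X|$ where $G_X$ has exactly as many vertices as a tree on $|H_X|$ vertices would force, leaving room for at most one degree-one doctor to be dropped from the edge count.
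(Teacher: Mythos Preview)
Your proposal is correct and is essentially the same as the paper's proof: both substitute the defining equality $|D_X| = |H_X|$ of $\mathcal{P}$ into inequality~\eqref{eq:family} and read off $|{\sf Leaf}_L(D_X)| \le 1$. Your additional paragraph merely recapitulates the spanning-tree justification for~\eqref{eq:family} that the paper already establishes before stating the lemma.
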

\begin{proof}
For every element $X \in \mathcal{P}$, 
since $|D_X| = |H_X|$, this lemma follows from \eqref{eq:family}. 
\end{proof} 

We define $\mathcal{P}^{\ast}$ as the family of elements $X \in \mathcal{P}$
such that $|{\sf Leaf}_L(D_X)| = 1$. 
For each element $X \in \mathcal{P}^{\ast}$, 
we define $d_X$ as the doctor in ${\sf Leaf}_L(D_X)$.
For each element $X \in \mathcal{P}^{\ast}$, 
we define $h_X$ as the hospital in $H_X$ 
such that $(d_X,h_X) \in L$.
Notice that,
for every element $X \in \mathcal{P}^{\ast}$, we have 
$(d_X,h_X) \in \mu$. 
Furthermore, 
for each element $X \in \mathcal{P}^{\ast}$, if
$|E(d_X)| = 2$, then 
we define $\overline{h}_X$ as the hospital 
in $H$ such that 
$\overline{h}_X \neq h_X$ and 
$(d_X, \overline{h}_X) \in E$. 

\begin{lemma} \label{lemma:family_q}
For every element $X \in \mathcal{Q}$, 
the following statements hold.
\begin{description}
\item[(i)]
$|L(d)| = 2$ for every doctor $d \in D_X$.
\item[(ii)] 
$|H_X| = |D_X|+1$. 
\end{description}
\end{lemma}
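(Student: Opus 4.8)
The plan is to prove both statements by combining the spanning-tree bound \eqref{eq:family} with the defining inequality $|D_X| < |H_X|$ for $X \in \mathcal{Q}$. The key observation is that these two facts are almost contradictory: \eqref{eq:family} gives $|H_X| - 1 \le |D_X| - |{\sf Leaf}_L(D_X)|$, while membership in $\mathcal{Q}$ gives $|D_X| + 1 \le |H_X|$. Chaining these yields $|D_X| + 1 \le |H_X| \le |D_X| - |{\sf Leaf}_L(D_X)| + 1$, which forces $|{\sf Leaf}_L(D_X)| \le 0$, hence $|{\sf Leaf}_L(D_X)| = 0$.

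Once $|{\sf Leaf}_L(D_X)| = 0$ is established, statement (i) follows immediately: by definition ${\sf Leaf}_L(D_X) = \{d \in D_X \mid |L(d)| = 1\}$, so its emptiness means no doctor $d \in D_X$ has $|L(d)| = 1$. Since every doctor $d \in D_X$ lies in a connected component of size at least two in $G^{\ast} = (D,H;L)$, Lemma~\ref{lemma:property_R} (via $L(d) \neq \emptyset$) guarantees $\mu(d) \neq \emptyset$, so $L(d) \neq \emptyset$ and hence $|L(d)| \ge 1$; combined with $|L(d)| \neq 1$ and the degree bound $|E(d)| \le 2$ of this section (which forces $|L(d)| \le |E(d)| \le 2$), we conclude $|L(d)| = 2$ for every $d \in D_X$.

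For statement (ii), the plan is to feed $|{\sf Leaf}_L(D_X)| = 0$ back into \eqref{eq:family} to get $|H_X| - 1 \le |D_X|$, i.e.\ $|H_X| \le |D_X| + 1$. Together with the defining strict inequality $|D_X| < |H_X|$ for $X \in \mathcal{Q}$, which reads $|D_X| + 1 \le |H_X|$, this sandwiches $|H_X|$ between $|D_X|+1$ and $|D_X|+1$, forcing $|H_X| = |D_X| + 1$.

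I do not expect any serious obstacle here; the argument is a short pigeonhole-style chaining of two counting inequalities, both already available. The only point requiring mild care is justifying $|L(d)| \le 2$ in part (i): this uses the standing hypothesis of this section that $|E(d)| \le 2$ together with $L(d) \subseteq E(d)$ (since $L = {\rm Ch}(E \setminus R) \subseteq E$), so that $|L(d)| \le |E(d)| \le 2$. Everything else is a direct substitution into \eqref{eq:family} and the membership conditions for $\mathcal{Q}$.
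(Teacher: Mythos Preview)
Your proposal is correct and follows essentially the same approach as the paper: both combine the spanning-tree inequality \eqref{eq:family} with the defining strict inequality $|D_X| < |H_X|$ to force ${\sf Leaf}_L(D_X) = \emptyset$, and then feed this back to obtain $|H_X| = |D_X| + 1$. Your write-up is slightly more explicit than the paper's in justifying why ${\sf Leaf}_L(D_X) = \emptyset$ yields $|L(d)| = 2$ (you spell out the bounds $1 \le |L(d)| \le |E(d)| \le 2$, whereas the paper leaves this implicit), and your detour through Lemma~\ref{lemma:property_R} to establish $L(d) \neq \emptyset$ is unnecessary since this already follows directly from $d \in D_X$ with $|X| \ge 2$ in $G^{\ast}$---but none of this affects correctness.
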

\begin{proof}
Let $X$ be an element in $\mathcal{Q}$. 
Since $|D_X| < |H_X|$, 
\eqref{eq:family} implies that 
${\sf Leaf}_L(D_X) = \emptyset$. 
This implies 
the statement (i).
Since ${\sf Leaf}_L(D_X) = \emptyset$, 
\eqref{eq:family} implies that 
$|H_X| \le |D_X| + 1$. 
Thus, since 
$|D_X| < |H_X|$ (i.e., $|D_X|+1 \le |H_X|$), 
we have $|D_X|+1 = |H_X|$.
This completes the proof. 
\end{proof} 

\begin{lemma} \label{lemma:tree} 
For every element $X \in \mathcal{Q}$ and 
every hospital $h \in H_X$, 
there exists a matching $\xi$ in $G$ such that 
$\xi \subseteq L(D_X)$, 
$\xi(h) = \emptyset$, and 
$\xi(d) \neq \emptyset$ holds for every doctor $d \in D_X$.
\end{lemma}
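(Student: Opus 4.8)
The plan is to recognize that, for $X \in \mathcal{Q}$, the auxiliary graph $G_X$ is in fact a tree, and then to build $\xi$ by rooting $G_X$ at $h$ and matching each doctor to the endpoint of its edge that lies farther from the root.

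First I would invoke Lemma~\ref{lemma:family_q}. Part (i) guarantees $|L(d)| = 2$ for every doctor $d \in D_X$, so $G_X$ has exactly $|D_X|$ edges, one edge $e_d$ for each doctor $d \in D_X$. Part (ii) gives $|H_X| = |D_X| + 1$, so $G_X$ has $|H_X|$ vertices and $|H_X| - 1$ edges. Since $G_X$ is connected, as observed just before \eqref{eq:family}, it follows that $G_X$ is a tree with vertex set $H_X$; in particular, for this case $G_X$ has no parallel edges.

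Next I would root $G_X$ at the prescribed hospital $h$. In the resulting rooted tree, every hospital $h^{\prime} \in H_X \setminus \{h\}$ has a unique parent, and the edge joining $h^{\prime}$ to its parent equals $e_d$ for a unique doctor $d \in D_X$. I would then define $\xi$ as follows: for each doctor $d \in D_X$, with $L(d) = \{(d,a),(d,b)\}$ and $e_d$ joining $a$ and $b$, let $\xi$ contain the edge $(d,h^{\prime})$, where $h^{\prime} \in \{a,b\}$ is the child endpoint of $e_d$ (the endpoint farther from $h$). By construction $\xi \subseteq L(D_X)$, and $\xi(d) \neq \emptyset$ for every doctor $d \in D_X$.

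It remains to verify that $\xi$ is a matching with $\xi(h) = \emptyset$, and this is exactly where the tree structure does the work. Each hospital $h^{\prime} \in H_X \setminus \{h\}$ is the child endpoint of exactly one edge of the tree, namely its parent edge; hence exactly one doctor is assigned to $h^{\prime}$, so $|\xi(h^{\prime})| \le 1$. The root $h$ is the child endpoint of no edge, so $\xi(h) = \emptyset$. Together with the fact that each doctor is assigned to exactly one hospital, this shows that $\xi$ is the desired matching. The only real obstacle is the opening observation that $G_X$ is a tree rather than a general connected graph; once that is in hand, rooting at $h$ and orienting every edge away from the root makes each non-root hospital the head of exactly one edge, which is precisely the matching condition with $h$ left unmatched.
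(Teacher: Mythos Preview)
Your proposal is correct and follows essentially the same approach as the paper: both recognize via Lemma~\ref{lemma:family_q} and the connectedness of $G_X$ that $G_X$ is a tree on $H_X$, root it at $h$, and assign each doctor to the child endpoint of its edge. Your verification that $\xi$ is a matching with $\xi(h)=\emptyset$ is in fact more explicit than the paper's, which simply asserts that this is ``not difficult to see.''
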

\begin{proof}
Let $X$ and $h$ be an element in $\mathcal{Q}$ and a hospital 
in $H_X$, respectively. 
Then since 
$G_X$ is connected and  
Lemma~\ref{lemma:family_q} implies that 
$|H_X| = |D_X|+1$, 
$G_X$ is a tree. 
Here we regard $G_X$ as the rooted tree rooted at $h$.
Then for each doctor $d \in D_X$, we define 
$h_d$ as follows.
Let $h,h^{\prime}$ be the hospitals in 
$H_X$ such that 
$L(d) = \{(d,h),(d,h^{\prime})\}$. 
Assume that 
$h^{\prime}$ is a child of $h$ in the rooted tree $G_X$. 
Then 
we define $h_d := h^{\prime}$. 
Define the matching $\xi$ by 
defining $\xi(d) := (d,h_d)$ for each doctor $d \in D_X$
(see Figure~\ref{fig:tree}). 
Then it is not difficult to see that 
$\xi$ is a matching satisfying the conditions in this lemma.
This completes the proof. 
\end{proof} 

\begin{figure}[t]
\begin{minipage}{0.33\hsize}
\centering
\includegraphics[width=5.5cm]{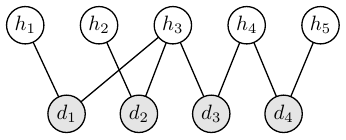}
\par(a)
\end{minipage}
\begin{minipage}{0.32\hsize}
\centering
\includegraphics[width=3.75cm]{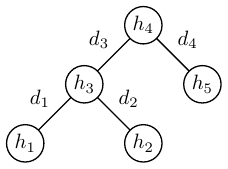}
\par(b)
\end{minipage}
\begin{minipage}{0.33\hsize}
\centering
\includegraphics[width=5.5cm]{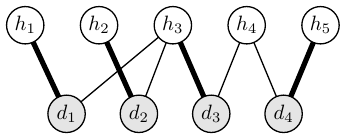}
\par(c)
\end{minipage}
\caption{(a) An example of $G_X$. 
(b) The tree rooted at $h_4$. 
(c) The desired matching.} 
\label{fig:tree}
\end{figure}

Define the simple directed graph ${\bf D} = (\mathcal{V},A)$ as follows. 
Define $\mathcal{V} := \mathcal{P}^{\ast} \cup \mathcal{Q} \cup \mathcal{R}$.
Then for each pair of distinct elements $X,Y \in \mathcal{V}$, 
there exists an arc from $X$ to $Y$ in $A$ if and only if 
the following conditions are satisfied. 
\begin{description}
\item[(P1)]
$Y \in \mathcal{P}^{\ast}$, 
$|E(d_Y)| = 2$, 
$\overline{h}_Y \in H_X$, and 
$(d_Y,h_Y) \succ_{d_Y} (d_Y,\overline{h}_Y)$. 
\item[(P2)]
If $X \in \mathcal{P}^{\ast}$, then 
$h_X = \overline{h}_Y$ and 
$(d_Y,h_X) \succ_{h_X} (d_X,h_X)$. 
\item[(P3)]
If $X \in \mathcal{Q}$, then 
$(d_Y,\overline{h}_Y) \succsim_{\overline{h}_Y} e$ for an edge $e \in L(\overline{h}_Y)$.
\item[(P4)] 
If $X \in \mathcal{R}$, then, for every 
edge $e = (d,\overline{h}_Y) \in R(\overline{h}_Y)$, 
the following conditions are satisfied.
\begin{description}
\item[i)]
If 
$e \sim_d \mu(d)$, then 
$(d_Y,\overline{h}_Y) \succsim_{\overline{h}_Y} e$. 
\item[ii)]
If 
$e \succ_d \mu(d)$, then 
$(d_Y,\overline{h}_Y) \succ_{\overline{h}_Y} e$. 
\end{description}
\end{description}
Notice that, in (P1), (R5) implies that 
$(d_Y,\overline{h}_Y) \notin R$. 
Furthermore,
in (P4), 
(R5) implies that 
$e \succsim_d \mu(d)$. 
For every element $X \in \mathcal{V}$, at most one arc of ${\bf D}$
enters $X$.

Define the subsets $\mathcal{V}^+, \mathcal{V}^- \subseteq \mathcal{V}$ as follows. 
For each element $X \in \mathcal{V}$, 
$X$ is contained in $\mathcal{V}^+$ if and only if 
(i) $X \in \mathcal{Q}$ and $H_X \subseteq H\setminus S$, or 
(ii) $X \in \mathcal{R}$.
For each element $X \in \mathcal{V}$, 
$X$ is contained in $\mathcal{V}^-$ if and only if 
$X \in \mathcal{P}^{\ast}$ and 
$h_X \in S$. 
Notice that, for any element $X \in \mathcal{V}^+$, 
there does not exist an arc of ${\bf D}$ entering $X$. 

Our algorithm for this special case is based on the following 
lemmas. We give the proofs of these lemmas in 
Sections~\ref{section:proof:lemma_1:theorem:degree_2} 
and \ref{section:proof:lemma_2:theorem:degree_2}. 

\begin{lemma} \label{lemma_1:theorem:degree_2} 
If, for every element $X \in \mathcal{V}^+$, 
there exists an element $Y \in \mathcal{V}^-$ such that 
there exists a directed path from $X$ to $Y$ 
in ${\bf D}$, then 
there exists a stable matching in $G$.
\end{lemma}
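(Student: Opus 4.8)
The plan is to transform the matching $\mu$ into a stable matching $\mu'$ by redistributing matched edges along the directed paths guaranteed by the hypothesis, so that every critical hospital is eliminated while no new blocking edge is created. Recall first that, by Lemma~\ref{lemma:property_R} together with $\mu\subseteq L$, every component $X\in\mathcal{P}$ is matched perfectly by $\mu$ onto $H_X$ (since $|D_X|=|H_X|$), so it contains no critical hospital; by Lemma~\ref{lemma:family_q} a component $X\in\mathcal{Q}$ has exactly one $\mu$-unmatched hospital, and a component $X\in\mathcal{R}$ is a single $\mu$-unmatched hospital of $H\setminus S$ with an incident $R$-edge. Hence every critical hospital with respect to $(R,\mu)$ lies in a component of $\mathcal{Q}$ (when its free hospital is in $H\setminus S$) or of $\mathcal{R}$. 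I would first dispose of each $X\in\mathcal{Q}$ with $H_X\cap S\neq\emptyset$ (that is, $X\notin\mathcal{V}^+$) locally: by Lemma~\ref{lemma:tree} one can rematch $D_X$ so that the unique $\mu$-unmatched hospital of $X$ becomes one in $S$, which is harmless by the closure rule. The remaining criticalities sit exactly in the sources $\mathcal{V}^+$, and these I would resolve by the path shifts.

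For a chosen directed path $X_0\to X_1\to\dots\to X_k$ with $X_0\in\mathcal{V}^+$ and $X_k\in\mathcal{V}^-$, the shift moves, along each arc, the leaf doctor $d_{X_i}$ of the head $X_i\in\mathcal{P}^*$ from $h_{X_i}$ to $\overline{h}_{X_i}\in H_{X_{i-1}}$. By (P2) the hospital $\overline{h}_{X_i}=h_{X_{i-1}}$ is precisely the one vacated by $d_{X_{i-1}}$, so the moves chain together; at the source $X_0$ I use Lemma~\ref{lemma:tree} (if $X_0\in\mathcal{Q}$) to rematch $D_{X_0}$ vacating $\overline{h}_{X_1}$, after which $d_{X_1}$ fills it and all of $H_{X_0}$ becomes matched, or (if $X_0\in\mathcal{R}$) the single hospital of $X_0$ is filled by $d_{X_1}$; at the sink the hospital $h_{X_k}\in S$ is left empty, which is again harmless. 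Thus each shift removes the criticality at its source and pushes the single free slot into an $S$-hospital.

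The key structural step is to argue that the paths can be chosen pairwise node-disjoint, so that the shifts act on disjoint sets of components and $\mu'$ is a well-defined matching. Here I would use that every vertex of $\mathbf{D}$ has in-degree at most one and that every $X\in\mathcal{V}^+$ has in-degree zero. Because in-degrees are at most one, each directed path is the unique backward chain obtained by repeatedly following the unique in-arc from its endpoint; hence the path reaching a sink $Y\in\mathcal{V}^-$ is determined by $Y$, and its source is the unique in-degree-zero vertex reached from $Y$. Consequently distinct sinks yield vertex-disjoint chains, and since the hypothesis lets me pick for each $X\in\mathcal{V}^+$ a sink $Y_X\in\mathcal{V}^-$ reachable from $X$, the assignment $X\mapsto Y_X$ is injective and the chosen paths $\{P_X\}$ are pairwise node-disjoint. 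In particular the touched components are disjoint, no doctor or hospital is modified by two shifts, and $\mu'$ is a matching.

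It then remains to verify that $\mu'$ is stable, and I expect this to be the main and most delicate part. I would check that no $e=(d,h)\in E\setminus\mu'$ blocks $\mu'$ by cases on $h$. For a hospital freed inside a path, namely $h_{X_i}$ with $i<k$, its new occupant $d_{X_{i+1}}$ is strictly preferred by (P2) over $d_{X_i}$ and, by flatness of $L$, over every other $L$-doctor of $h_{X_i}$, so $d_{X_i}$ (which prefers $h_{X_i}$ by (P1)) and any other claimant cannot block; for the sink hospital $h_{X_k}\in S$ the closure rule forbids blocking. For a newly filled interface hospital $\overline{h}_{X_i}$, I would use (P2), (P3), or (P4) according to whether the predecessor $X_{i-1}$ lies in $\mathcal{P}^*$, $\mathcal{Q}$, or $\mathcal{R}$, together with the flatness of $L$ and the remarks that $(d_{X_i},\overline{h}_{X_i})\notin R$ and $e\succsim_d\mu(d)$ for $e\in R(\overline{h}_{X_i})$, to rule out blocking by $L$-doctors and by the $R$-edges incident to $\overline{h}_{X_i}$. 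For hospitals in untouched components $\mu'$ agrees with $\mu$ and no critical hospital remains, so the argument of Lemma~\ref{lemma:no_critical} applies locally via (R4) and (R5); the only subtlety is that a moved leaf doctor $d_{X_i}$ has both of its edges inside touched components, and each $\mathcal{Q}$-doctor has both edges inside its own component (Lemma~\ref{lemma:family_q}), so no touched doctor can block at an untouched hospital. The hard part is precisely this last verification: confirming that the $R$-edges incident to the filled hospitals (governed by (P4)) and the doctors displaced inside the path components create no blocking pair, which requires carefully matching the weak/strong blocking definition against the preference conditions (P1)--(P4).
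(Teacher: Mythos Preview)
Your construction and case analysis are essentially the paper's: rematch $\mathcal{Q}$-components via Lemma~\ref{lemma:tree}, shift the leaf doctors $d_{X_i}$ along the chosen paths, and then verify stability edge by edge using (P1)--(P4), flatness of $L$, (R4)--(R5), and the closure rule for $S$-hospitals. The paper organizes the stability check by where the edge $(d,h)$ lies (its Cases~1--3), which lines up with the hospital-based split you sketch.

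One correction in your disjointness argument: the claim ``distinct sinks yield vertex-disjoint chains'' is false in general. Two backward chains from different sinks can merge at a common vertex $Z$ and thereafter coincide, because out-degrees in ${\bf D}$ are unbounded (only in-degrees are at most one); so distinct $Y_1,Y_2\in\mathcal{V}^-$ may well sit in the same in-tree and share an initial segment. The correct (and simpler) argument---whose ingredients you already have---is the one the paper uses: if $P_X$ and $P_Z$ for distinct $X,Z\in\mathcal{V}^+$ shared a vertex $W$, then following the unique in-arc backward from $W$ would terminate at a single in-degree-$0$ vertex, forcing $X=Z$. With this fix your proposal goes through.
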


\begin{lemma} \label{lemma_2:theorem:degree_2} 
If there exists a stable matching in $G$,
then, for every element $X \in \mathcal{V}^+$, 
there exists an element $Y \in \mathcal{V}^-$ such that 
there exists a directed path from $X$ to $Y$ 
in ${\bf D}$. 
\end{lemma}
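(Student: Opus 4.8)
The plan is to prove the contrapositive: assuming a stable matching $\sigma$ exists, I will show that every $X \in \mathcal{V}^+$ reaches some $Y \in \mathcal{V}^-$, by deriving a contradiction from the opposite. So fix $X \in \mathcal{V}^+$, let $Z \subseteq \mathcal{V}$ be the set of elements reachable from $X$ in ${\bf D}$ (including $X$), and suppose $Z \cap \mathcal{V}^- = \emptyset$. First I would pin down the shape of $Z$. Every arc of ${\bf D}$ has its head in $\mathcal{P}^*$ by (P1), so no arc enters a $\mathcal{Q}$- or $\mathcal{R}$-element; combined with the fact that at most one arc enters each vertex, this forces $Z$ to be an out-arborescence rooted at $X$ with $Z \setminus \{X\} \subseteq \mathcal{P}^*$ and $X \in \mathcal{Q} \cup \mathcal{R}$. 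Since $Z \cap \mathcal{V}^- = \emptyset$, every $W \in Z \cap \mathcal{P}^*$ has $h_W \in H \setminus S$. Throughout I would use that $\sigma \subseteq E \setminus R$ by (R1).

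Next I would record the structural consequences. Writing $\mathcal{H}_Z := \bigcup_{W \in Z} H_W$ and $\mathcal{D}_Z := D_X \cup \bigcup_{W \in Z \cap \mathcal{P}^*} D_W$, the arc entering each $W \in Z \setminus \{X\}$ supplies, through (P1), that $|E(d_W)| = 2$ and $\overline{h}_W \in H_{\pi(W)} \subseteq \mathcal{H}_Z$, where $\pi(W)$ denotes the parent of $W$. Hence every leaf doctor of a $\mathcal{P}^*$-element of $Z$ has both of its edges inside $\mathcal{H}_Z$, and every remaining doctor of $\mathcal{D}_Z$ is a degree-two doctor trapped in its own component; so no doctor of $\mathcal{D}_Z$ can be matched by $\sigma$ outside $\mathcal{H}_Z$. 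Using $|D_W| = |H_W|$ for $W \in \mathcal{P}^*$ and $|H_X| = |D_X| + 1$ (Lemma~\ref{lemma:family_q} when $X \in \mathcal{Q}$, trivially when $X \in \mathcal{R}$), this yields the surplus identity $|\mathcal{H}_Z| = |\mathcal{D}_Z| + 1$.

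The engine of the contradiction is a must-match claim: \emph{any} non-$S$ hospital $h$ incident to an $L$- or $R$-edge satisfies $\sigma(h) \neq \emptyset$, for otherwise an $L$-edge at $h$ blocks $\sigma$ (doctor-side weak blocking from ${\rm Ch}_d$-optimality, hospital-side strong blocking from $\sigma(h) = \emptyset$ and $h \notin S$), or, if $h$ carries only $R$-edges, an edge of $R(h)$ blocks $\sigma$ by (R5); this is the mechanism already used in Lemma~\ref{lemma:no_critical} and Lemma~\ref{lemma:separate_No}. Every non-$S$ hospital of $\mathcal{H}_Z$ is therefore matched, so the surplus of $1$ must be absorbed by a doctor entering $\mathcal{H}_Z$ from outside. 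Any such doctor is necessarily a leaf doctor $d_{W'}$ with $W' \in \mathcal{P}^* \setminus Z$ and $\sigma(d_{W'}) = (d_{W'}, \overline{h}_{W'})$, $\overline{h}_{W'} \in H_V$ for some $V \in Z$; because $W' \notin Z$ the arc $V \to W'$ is absent, so one of (P1)--(P4) fails. I would then convert each failure into a blocking edge: if (P1) fails, $(d_{W'}, \overline{h}_{W'})$ survives ${\rm Ch}_D(E \setminus R)$ and the ${\rm Ch}_H$-optimal edge at $\overline{h}_{W'}$ blocks $\sigma$; if $V = X \in \mathcal{R}$ and (P4) fails, an edge of $R(\overline{h}_{W'})$ blocks via (R5) together with the $d$-optimality of $\mu$; if $V = X \in \mathcal{Q}$ and (P3) fails, any $L$-edge at $\overline{h}_{W'}$ blocks; and if $V \in \mathcal{P}^*$ with $\overline{h}_{W'} = h_V$ and (P2) fails, comparison with $(d_V, h_V)$ yields a block.

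The one configuration with no immediate local blocking edge --- and the step I expect to be the main obstacle --- is $V \in \mathcal{P}^*$ with $\overline{h}_{W'} \neq h_V$, where the external doctor occupies a \emph{non-home} hospital of the balanced component $V$. Here the incoming doctor does not genuinely erase the surplus: it displaces an internal doctor of $V$ and merely relocates the deficit within the arborescence. To finish I would propagate this deficit along the tree toward the root $X$, using that the displaced doctor is either trapped (forcing an unmatched non-$S$ hospital of $V$) or is the leaf $d_V$, which can only move to $\overline{h}_V \in H_{\pi(V)}$ and thereby pushes the deficit strictly closer to $X$; and I would observe that at $X \in \mathcal{Q} \cup \mathcal{R}$ the deficit cannot be absorbed, since $H_X$ contains no $S$-hospital. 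The parallel route is that $(d_{W'}, h_{W'})$ forces a blocking edge at $W'$'s home unless $h_{W'} \in S$ is left empty, in which case $W'$ itself retains an unabsorbed deficit. Either way an unmatched non-$S$ hospital incident to an edge appears, contradicting the must-match claim and hence the stability of $\sigma$. Showing that this propagation always terminates at $X$ without absorption --- that the acyclic, in-degree-one structure of ${\bf D}$ leaves no escape once $Z \cap \mathcal{V}^- = \emptyset$ --- is the technical heart of the argument.
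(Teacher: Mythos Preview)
Your approach differs substantially from the paper's and contains a genuine gap. The paper does \emph{not} work with the full reachable set $Z$ in ${\bf D}$. Instead it introduces the $\sigma$-dependent sub-digraph ${\bf D}_X = (\mathcal{V}, A_X)$, keeping only those arcs $Z' \to Z$ of ${\bf D}$ for which $(d_Z,\overline{h}_Z)\in\sigma$, and considers the set $\mathcal{V}_X$ reachable from $X$ there. Since ${\bf D}_X$ inherits the in-degree bound it is acyclic, so $\mathcal{V}_X$ has a sink $T$; the paper then exhibits a single local blocking edge at $T$ whenever $T\notin\mathcal{V}^-$, splitting into the three cases $T=X\in\mathcal{Q}$, $T=X\in\mathcal{R}$, and $T\in\mathcal{P}^{\ast}$. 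No counting over a large vertex set is needed.

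The gap in your argument is the step ``the surplus of $1$ must be absorbed by a doctor entering $\mathcal{H}_Z$ from outside.'' You have only secured $h_W\notin S$ for $W\in Z\cap\mathcal{P}^{\ast}$; the \emph{other} hospitals of $H_W$ may lie in $S$, and any such hospital can be left unmatched by $\sigma$ without creating a blocking edge, absorbing the surplus with no external doctor at all. Your propagation sketch does not address this, and the further claim that a ``trapped'' displaced doctor forces an unmatched non-$S$ hospital of $V$ is also unjustified: if a non-leaf $d\in D_V$ is unmatched and $h'\in\Gamma_L(d)$, stability only yields $\sigma(h')\succ_{h'}(d,h')$ (or $h'\in S$ with $\sigma(h')=\emptyset$), and nothing you have established rules out $\sigma(h')\notin L$ with $\sigma(h')\succ_{h'}(d,h')$. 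So neither an unmatched hospital nor a blocking edge is forced. The remedy is exactly the paper's idea: rather than fixing all of $Z$ in advance and arguing globally, let $\sigma$ select the next arc at each step and analyse the first place this walk gets stuck.
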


\begin{theorem} \label{theorem:degree_2} 
There exists a stable matching in $G$ 
if and only if, for every element $X \in \mathcal{V}^+$, 
there exists an element $Y \in \mathcal{V}^-$ such that 
there exists a directed path from $X$ to $Y$ 
in ${\bf D}$. 
\end{theorem}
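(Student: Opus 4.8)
The plan is to read Theorem~\ref{theorem:degree_2} as a biconditional and to supply each of its two directions from one of the two immediately preceding lemmas, which have evidently been arranged precisely for this purpose. Let me abbreviate by (C) the combinatorial condition that every element $X \in \mathcal{V}^+$ admits a directed path in ${\bf D}$ to some element $Y \in \mathcal{V}^-$. The theorem asserts that a stable matching in $G$ exists if and only if (C) holds, and each of the two implications is exactly one of the deferred lemmas.

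For the direction (C) $\Rightarrow$ existence I would invoke Lemma~\ref{lemma_1:theorem:degree_2}, and for the converse existence $\Rightarrow$ (C) I would invoke Lemma~\ref{lemma_2:theorem:degree_2}. Chaining these two implications yields the equivalence, so the proof of the theorem itself is a one-line combination of the two lemmas and presents no obstacle of its own. Thus the genuine content lies entirely in Lemma~\ref{lemma_1:theorem:degree_2} and Lemma~\ref{lemma_2:theorem:degree_2}, whose proofs are postponed to Sections~\ref{section:proof:lemma_1:theorem:degree_2} and \ref{section:proof:lemma_2:theorem:degree_2}.

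If I were to carry those proofs out, I expect the main obstacle to be the \emph{sufficiency} direction, Lemma~\ref{lemma_1:theorem:degree_2}. There I would start from the matching $\mu$ of Lemma~\ref{lemma:pre-process} and reroute it along the guaranteed directed paths: each $X \in \mathcal{V}^+$ is a configuration (an undersaturated all-non-closure component in $\mathcal{Q}$, or a single non-closure hospital carrying an $R$-edge in $\mathcal{R}$) that would otherwise force a blocking edge exactly as in the proof of Lemma~\ref{lemma:separate_No}, whereas each $X \in \mathcal{V}^-$ is a $\mathcal{P}^{\ast}$ component whose leaf hospital $h_X$ lies in $S$ and can therefore be safely left empty under the closure rule. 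Following a path from $X$ to $Y$ should let the excess capacity at $X$ be filled while the resulting deficiency is absorbed at such a closure hospital. The delicate point will be to perform these reroutings simultaneously over all the paths and to verify that the outcome is a single well-defined matching that is genuinely stable; here the structural facts recorded in the setup, namely that every $X$ has at most one incoming arc in ${\bf D}$ and that each $\mathcal{V}^+$ vertex is a source, should prevent the paths from interfering destructively. The \emph{necessity} direction, Lemma~\ref{lemma_2:theorem:degree_2}, I would treat contrapositively: from a stable matching $\sigma$, the reasoning behind Lemma~\ref{lemma:no_critical} and Lemma~\ref{lemma:separate_No} forces each dangerous component in $\mathcal{V}^+$ to have a doctor pulled in from an adjacent component, which is exactly the traversal of an arc of ${\bf D}$, and iterating produces a path that by the in-degree-one structure cannot cycle and so must terminate in $\mathcal{V}^-$. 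I expect this second direction to be more mechanical once the arc conditions (P1)--(P4) are matched against the definition of a blocking edge.
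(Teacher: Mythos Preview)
Your proposal is correct and matches the paper's own proof exactly: the theorem is derived in one line by combining Lemma~\ref{lemma_1:theorem:degree_2} for the sufficiency direction and Lemma~\ref{lemma_2:theorem:degree_2} for the necessity direction. Your additional sketch of how those lemmas are proved is also broadly in line with what the paper does in Sections~\ref{section:proof:lemma_1:theorem:degree_2} and~\ref{section:proof:lemma_2:theorem:degree_2}, though that discussion goes beyond what is needed for the theorem itself.
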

\begin{proof}
This theorem immediately follows from 
Lemmas~\ref{lemma_1:theorem:degree_2} and 
\ref{lemma_2:theorem:degree_2}. 
\end{proof} 

Theorem~\ref{theorem:degree_2} implies that 
the special case in this section can be solved in polynomial time
by checking the existence of a desired directed path in ${\bf D}$ for each 
element in $\mathcal{V}^+$. 
Furthermore, the proof of Lemma~\ref{lemma_1:theorem:degree_2}
gives a polynomial-time algorithm for finding a
stable matching if exists. 

\subsection{Proof of Lemma~\ref{lemma_1:theorem:degree_2}}
\label{section:proof:lemma_1:theorem:degree_2}

In this subsection, we assume that,  
for every element $X \in \mathcal{V}^+$, 
there exists an element $Y_X \in \mathcal{V}^-$ such that 
there exists a directed 
path from $X$ to $Y_X$ 
in ${\bf D}$. 
For each element $X \in \mathcal{V}^+$, 
let $P_X$ be a directed 
path from $X$ to $Y_X$ in ${\bf D}$.
We assume that, 
for each element $X \in \mathcal{V}^+$, 
$P_X$ goes through vertices  
$X_1,X_2,\dots,X_{\ell(X)}$
of ${\bf D}$ in this order.
Notice that, for each element $X \in \mathcal{V}^+$, 
$X_{\ell(X)} = Y_X$ holds. 
Since
at most one arc of ${\bf D}$ enters 
$X$ for every element $X \in \mathcal{V}$,
the directed paths $P_X, P_Z$ are vertex-disjoint for every pair of 
distinct elements $X,Z \in \mathcal{V}^+$.  

For each element $X \in \mathcal{P}^{\ast}$, 
if there exist an element $Y \in \mathcal{V}^+$ 
and an integer $i \in [\ell(Y)]$ 
such that 
$Y_i = X$, then we say that 
\emph{$X$ is on $P_Y$}. 

\begin{figure}[t]
\begin{minipage}{0.49\hsize}
\centering
\includegraphics[width=7cm]{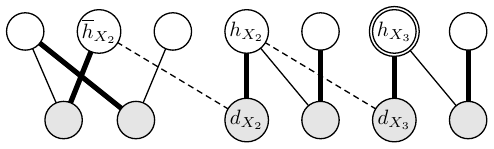}
\par(a)
\end{minipage}
\begin{minipage}{0.49\hsize}
\centering
\includegraphics[width=7cm]{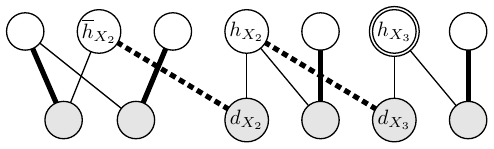}
\par(b)
\end{minipage}
\caption{(a) The real lines represent edges in $L$.
The broken lines represent edges in $E \setminus L$. 
The bold lines represent edges in $\mu$.  
(b) The bold lines represent edges in $\sigma$.}
\label{fig:replace}
\end{figure}

Define $\mu^{\bullet}$ as the matching obtained from $\mu$ 
in the following way.
For each element $X \in \mathcal{Q} \cap \mathcal{V}^+$, 
we replace 
the edges in $\mu$ contained in 
$G^{\ast}\langle X \rangle$
with the matching $\xi$ in 
Lemma~\ref{lemma:tree}
such that $\xi(\overline{h}_{X_2}) = \emptyset$ 
(see Figure~\ref{fig:replace}). 

Define 
$\mu^{\circ}$ as the matching obtained from $\mu^{\bullet}$
in the following way. 
For each element $X \in \mathcal{Q} \setminus \mathcal{V}^+$, 
we define ${\sf fr}_X$ as an arbitrary hospital in $H_X \cap S$, 
and then we replace 
the edges in $\mu^{\bullet}$ 
contained in 
$G^{\ast}\langle X \rangle$
with the matching $\xi$ in 
Lemma~\ref{lemma:tree}
such that $\xi({\sf fr}_X) = \emptyset$. 

Finally, we define the matching $\sigma$ as follows. 
Define $M_+,M_-$ by 
\begin{equation*}
\begin{split}
M_+ & := \bigcup_{X \in \mathcal{V}^+}\{(d_{X_2},\overline{h}_{X_2}),
(d_{X_3},\overline{h}_{X_3}),
\dots,(d_{X_{\ell(X)}},\overline{h}_{X_{\ell(X)}})\},\\ 
M_- & := \bigcup_{X \in \mathcal{V}^+}\{(d_{X_2},h_{X_2}),
(d_{X_3},h_{X_3}),\dots,(d_{X_{\ell(X)}},h_{X_{\ell(X)}})\}. 
\end{split} 
\end{equation*}
Define $\sigma := (\mu^{\circ} \setminus M_-) \cup M_+$
(see Figure~\ref{fig:replace}).  
It is not difficult to see that 
$\sigma$ is a matching in $G$. 
Thus, what remains is to prove that 
$\sigma$ is stable. 

In the rest of this subsection, let us fix an edge $e = (d,h) \in E \setminus \sigma$. 
Then we prove that $e$ does not block $\sigma$. 
In what follows, we divide the proof into the following three cases. 
\begin{description}
\item[Case~1.] 
There exists an element $X \in \mathcal{P}$
such that 
$e$ is an edge of 
$G\langle X \rangle$.
\item[Case~2.]
There exists an element $X \in \mathcal{Q}$
such that 
$e$ is an edge of 
$G\langle X \rangle$.
\item[Case~3.]
There does not exist an element $X \in \mathcal{P} \cup \mathcal{Q}$ 
such that 
$e$ is an edge of 
$G\langle X \rangle$.
\end{description}

\subsubsection{Case~1}

If $X \notin \mathcal{P}^{\ast}$, i.e.,  
$|L(d)|=2$ for every doctor $d \in D_X$, then
$e \in L$, $\sigma(d) \in L$, and $\sigma(h) \in L$. 
This implies that $e$ does not block $\sigma$. 
Thus, in what follows, 
we assume that $X \in \mathcal{P}^{\ast}$. 

We first consider the case where 
there does not exist an element $Y \in \mathcal{V}^+$ 
such that $X$ is on $P_Y$. 
In this case, $\sigma(d) = \mu(d) \in L$ and 
$\sigma(h) = \mu(h) \in L$.
Thus, if $e \in L$, then $e$ does not block $\sigma$.
This implies that we can assume that $e \notin L$. 
If $e \in R$ and $e$ blocks $\sigma$, then 
$e \in R \cap {\sf block}(L)$.
However, this contradicts (R4).
This implies that if $e \in R$, then 
$e$ does 
not block $\sigma$.
Thus, we assume that $e \notin R$. 
If $\mu(d) \succ_d e$, then 
$e$ does not block $\sigma$.
Thus, we assume that 
$e \succsim_d \mu(d)$.
In this case, since $e \notin R$,
$e \in {\rm Ch}_d(E \setminus R)$. 
Thus, since $e \notin L$ and $\mu(h) \in L$,
we have $\mu(h) \succ_h e$. 
This implies that 
$e$ does not block $\sigma$.

Next we consider the case where 
there exists an element $Y \in \mathcal{V}^+$ 
such that $X$ is on $P_Y$. 
In this case, we have $e \in L$. 
If $h \neq h_{X}$, then  
$\sigma(d) = \mu(d) \in L$, and $\sigma(h) = \mu(h) \in L$. 
Thus, $e$ does not block $\sigma$. 
Assume that $h = h_X$. 
If $X = Y_{\ell(Y)}$, 
then $h \in S$ and 
$\sigma(h) = \emptyset$.
Thus, $e$ does 
not block $\sigma$. 
Assume that 
$X \neq Y_{\ell(Y)}$.
If $d \neq d_X$, then $\sigma(d) \in L$ and $\sigma(h) \succsim_h \mu(h) \in L$. 
Thus, $e$ does not block $\sigma$.
If $d = d_X$ (i.e., $e = (d_{X},h_{X})$), then 
since $\sigma(h) \succ_{h} e$,
$e$ does not block $\sigma$. 

\subsubsection{Case~2} 

In this case, $e \in L$. 
Thus, if $\sigma(h) \in L$, then 
$\sigma(d) \sim_d e$ and $\sigma(h) \sim_h e$. 
If $\sigma(h) \notin L$ and $\sigma(h) \neq \emptyset$, then 
since $e \in L$, (P3) implies that 
$\sigma(d) \sim_d e$ and $\sigma(h) \succsim_h e$. 
If $\sigma(h) = \emptyset$, then 
$h \in S$. 
Thus, $e$ does not block $\sigma$. 

\subsubsection{Case~3} 

In this case, $e \notin L$. 
Otherwise, $e$ is an edge of 
$G\langle X \rangle$
for some element $X \in \mathcal{P} \cup \mathcal{Q}$.

\begin{claim} \label{claim_1:case_3:lemma_1:theorem:degree_2} 
$\sigma(d) = \mu(d)$. 
\end{claim}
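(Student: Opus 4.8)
The plan is to trace exactly which doctors have their partner altered in passing from $\mu$ to $\sigma$, and then to show that $d$ is not one of them; once that is established, $\sigma(d) = \mu(d)$ is immediate. Recall that $\sigma$ is built from $\mu$ in three stages. Forming $\mu^{\bullet}$ and then $\mu^{\circ}$ only re-routes the edges lying inside the components $X \in \mathcal{Q}$ (replacing them by the matchings $\xi$ of Lemma~\ref{lemma:tree}), so the only doctors touched by these two stages belong to $\bigcup_{X \in \mathcal{Q}} D_X$. The final replacement $\sigma := (\mu^{\circ} \setminus M_-) \cup M_+$ only alters the doctors $d_{X_i}$ with $X \in \mathcal{V}^+$ and $2 \le i \le \ell(X)$, each of which satisfies $X_i \in \mathcal{P}^{\ast}$ by (P1). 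Hence it suffices to prove that $d$ lies in neither of these two families.

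First I would invoke the standing fact of Case~3, namely $e = (d,h) \notin L$, so that $e \in E(d) \setminus L$. Suppose, toward a contradiction, that $d \in D_X$ for some $X \in \mathcal{Q}$. By Lemma~\ref{lemma:family_q}(i) we have $|L(d)| = 2$, and since $|E(d)| \le 2$ this forces $E(d) = L(d) \subseteq L$; but then $e \in E(d) \subseteq L$, contradicting $e \notin L$. Thus $d$ lies in no $\mathcal{Q}$-component.

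Next suppose $d = d_{X_i}$ for some $X \in \mathcal{V}^+$ and some $i$ with $2 \le i \le \ell(X)$. Since an arc of $\mathbf{D}$ enters $X_i$, condition (P1) gives $|E(d_{X_i})| = 2$, so $E(d_{X_i}) = \{(d_{X_i},h_{X_i}),(d_{X_i},\overline{h}_{X_i})\}$. Moreover $(d_{X_i},\overline{h}_{X_i}) \in M_+$, whence $\sigma(d) = (d_{X_i},\overline{h}_{X_i})$. As $e \in E(d) \setminus \sigma$ and $e \neq \sigma(d)$, the only remaining possibility is $e = (d_{X_i},h_{X_i})$. But $(d_{X_i},h_{X_i}) \in \mu \subseteq L$ by (R2), again contradicting $e \notin L$. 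Hence $d$ is not of this form either.

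Combining the two cases, $d$ is left untouched by all three stages of the construction, so $\sigma(d) = \mu^{\circ}(d) = \mu(d)$, as claimed. The only delicate point is the bookkeeping in the first paragraph, i.e.\ correctly pinning down the complete set of doctors whose partner can change; after that the bounded-degree hypothesis $|E(d)| \le 2$ does the decisive work in both subcases, forcing every edge of $d$ into $L$ and clashing with $e \notin L$.
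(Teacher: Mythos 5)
Your proof is correct and follows essentially the same route as the paper: both arguments come down to showing that $d$ is neither a doctor of a $\mathcal{Q}$-component nor a path doctor $d_{X_i}$ with $i \ge 2$, using the degree bound $|E(d)| \le 2$ together with the Case-3 fact $e \notin L$. Your explicit bookkeeping of which doctors are touched by each of the three construction stages is a slightly more systematic packaging of the paper's two-case split ($L(d) = \emptyset$ versus $L(d) \neq \emptyset$, the latter forcing $e = (d_X,\overline{h}_X)$ for some $X \in \mathcal{P}^{\ast}$ not on any path), but the substance is identical.
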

\begin{proof}
If $L(d) = 0$, then 
$d \notin D_X$ holds for every element $X \in \mathcal{P} \cup \mathcal{Q}$. 
This and the definition of $\sigma$ implies that 
$\sigma(d) = \mu(d) = \emptyset$. 
Assume that 
$L(d) \neq \emptyset$. Then 
there exists an element 
$X \in \mathcal{P}^{\ast}$ such that 
$e = (d_X,\overline{h}_X)$.
Since $e \notin \sigma$, 
there does not exist an element $Y \in \mathcal{V}^+$ 
such that $X$ is on $P_Y$.
Thus, $\sigma(d) = \mu(d) = (d_X,h_X)$.
This completes the proof. 
\end{proof} 

If there exists an element $Z \in \mathcal{P} \cup \mathcal{Q}$ 
such that $h \in H_Z$ and $\sigma(h) = \emptyset$, 
then $h \in S$. 
In this case, $e$ does not block $\sigma$. 
Thus, it is sufficient to consider the following cases. 

\begin{description}
\item[(3a)] 
There exists an element $Z \in \mathcal{P}$ 
such that $h \in H_Z$ and $\sigma(h) \neq \emptyset$.
\item[(3b)]
There exists an element $Z \in \mathcal{Q}$ 
such that $h \in H_Z$ and $\sigma(h) \neq \emptyset$. 
\item[(3c)]
There exists an element $Z \in \mathcal{R}$ 
such that $h \in H_Z$.
\item[(3d)]
There does not exist an element $Z \in \mathcal{P} \cup \mathcal{Q} \cup \mathcal{R}$ 
such that $h \in H_Z$.
\end{description}

Notice that, in the cases (3c) and (3d), 
we have 
$L(h) = \emptyset$. 

\begin{claim} \label{claim_2:case_3:lemma_1:theorem:degree_2} 
In the cases {\rm (3a)} and {\rm (3b)},
if $e \succsim_d \mu(d)$ and $e \notin R$, then 
$e$ does not block $\sigma$. 
\end{claim}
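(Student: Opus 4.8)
The plan is to show that, under the hypotheses $e \succsim_d \mu(d)$ and $e \notin R$, the edge $e = (d,h)$ fails to weakly block $\sigma$ on the hospital $h$; this suffices once $e$ is shown not to strongly block on $d$. First I would invoke Claim~\ref{claim_1:case_3:lemma_1:theorem:degree_2} to record $\sigma(d) = \mu(d)$. Since $e \notin R$ and $e \in E(d)$, we have $e \in (E \setminus R)(d)$, and (R2) gives $\mu(d) \in L(d) \subseteq {\rm Ch}_d(E \setminus R)$, so $\mu(d)$ is $\succsim_d$-maximal in $(E \setminus R)(d)$ and thus $\mu(d) \succsim_d e$. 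Combined with the assumption $e \succsim_d \mu(d)$, this yields $e \sim_d \mu(d) = \sigma(d)$, so $e$ does not strongly block $\sigma$ on $d$. Consequently $e$ blocks $\sigma$ only if it strongly blocks on $h$, and the claim reduces to proving $\sigma(h) \succsim_h e$.

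The second step extracts what being in Case~3 buys us at $h$. From $e \sim_d \mu(d)$ we get $e \in {\rm Ch}_d(E \setminus R) \subseteq {\rm Ch}_D(E \setminus R)$, while $e \notin L$ means $e \notin {\rm Ch}_h({\rm Ch}_D(E \setminus R))$. This forces $L(h) \neq \emptyset$ and $g \succ_h e$ for every edge $g \in L(h)$. This is the pivotal fact: it reduces the target inequality $\sigma(h) \succsim_h e$ to showing that $\sigma(h)$ is at least as good at $h$ as some edge of $L(h)$.

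Next I would analyse $\sigma(h)$ through the decomposition $\sigma = (\mu^{\circ} \setminus M_-) \cup M_+$. Since $\mu^{\circ} \subseteq L$ (passing from $\mu$ to $\mu^{\bullet}$ to $\mu^{\circ}$ only substitutes $\xi$-edges, which lie in $L$ by Lemma~\ref{lemma:tree}), either $\sigma(h) \in \mu^{\circ} \subseteq L$ or $\sigma(h) \in M_+$. In the first case $\sigma(h) \in L(h)$, so $\sigma(h) \succ_h e$ by the pivotal fact and we are done. In the second case write $\sigma(h) = (d_{X_i},\overline{h}_{X_i})$ with $h = \overline{h}_{X_i}$, so $h \in H_{X_{i-1}}$ and hence $Z = X_{i-1}$; since $Z \in \mathcal{P} \cup \mathcal{Q}$ lies in $\mathcal{V} = \mathcal{P}^{\ast} \cup \mathcal{Q} \cup \mathcal{R}$ and $\mathcal{Q},\mathcal{R}$ are disjoint from $\mathcal{P}$, in Case~(3a) necessarily $Z \in \mathcal{P}^{\ast}$. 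Feeding the arc $X_{i-1} \to X_i$ of ${\bf D}$ into the defining conditions: in Case~(3a), (P2) gives $h = h_{X_{i-1}}$ and $(d_{X_i},h) \succ_h (d_{X_{i-1}},h) = \mu(h) \in L(h)$, whence $\sigma(h) \succ_h \mu(h) \succ_h e$; in Case~(3b), (P3) gives $\sigma(h) = (d_{X_i},h) \succsim_h g$ for some $g \in L(h)$, whence $\sigma(h) \succsim_h g \succ_h e$. Either way $\sigma(h) \succ_h e$, so $e$ does not weakly block $\sigma$ on $h$, completing the proof.

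The main obstacle is the bookkeeping in the third step: correctly identifying $\sigma(h)$ when it is an $M_+$ edge, aligning the indices so that $Z$ is exactly the tail $X_{i-1}$ of the relevant arc, and confirming $Z \in \mathcal{P}^{\ast}$ in Case~(3a) so that $h_{X_{i-1}}$ and the edge $\mu(h)$ are well defined and lie in $L(h)$. Once the arc condition (P2) or (P3) is matched with the pivotal fact $g \succ_h e$ for $g \in L(h)$, the inequalities chain together immediately.
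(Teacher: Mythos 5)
Your proposal is correct and follows essentially the same route as the paper: deduce from $e\succsim_d\mu(d)$ and $e\notin R$ that $e\in{\rm Ch}_d(E\setminus R)$, hence $L(h)\neq\emptyset$ and every edge of $L(h)$ is $\succ_h e$ (because $e\notin L$), and then use (P2)/(P3) to conclude $\sigma(h)\succsim_h L$; your third step merely unpacks the paper's one-line appeal to (P2) and (P3) via the decomposition $\sigma=(\mu^{\circ}\setminus M_-)\cup M_+$. The only nit is that your appeal to (R2) for $\mu(d)\in L(d)$ tacitly uses $\mu(d)\neq\emptyset$, which the paper justifies explicitly by Lemma~\ref{lemma:property_R}(iii) since $e\in E(d)\setminus R$.
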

\begin{proof}
Assume that $e \succsim_d \mu(d)$ and $e \notin R$. 
Lemma~\ref{lemma:property_R}
implies that $\mu(d) \neq \emptyset$. 
Since $e \succsim_d \mu(d)$ and $\mu(d) \in L$, 
$e \in {\rm Ch}_d(E \setminus R) \cap E(h)$. 
Thus, $L(h) \neq \emptyset$. 
Since $e \notin L$, 
$f \succ_h e$ for an edge $f \in L(h)$. 
This implies that since (P2) and (P3) imply that $\sigma(h) \succsim_h L$, 
$e$ does not 
block $\sigma$. 
\end{proof} 

\begin{claim} \label{claim_3:case_3:lemma_1:theorem:degree_2} 
In the cases {\rm (3c)} and {\rm (3d)},
if $e \succsim_d \mu(d)$, 
then $e \in R$.
\end{claim}
\begin{proof}
Assume that $e \succsim_d \mu(d)$ and $e \notin R$. 
Then Lemma~\ref{lemma:property_R}
implies that $\mu(d) \neq \emptyset$. 
Since $e \succsim_d \mu(d)$ and $\mu(d) \in L$, 
$e \in {\rm Ch}_d(E \setminus R) \cap E(h)$. 
However, this contradicts the fact $L(h) = \emptyset$. 
\end{proof} 

Claim~\ref{claim_1:case_3:lemma_1:theorem:degree_2}
implies that if $\mu(d) \succ_d e$, then  
$e$ does not block $\sigma$. 
Thus, in what follows, we assume that $e \succsim_d \mu(d) = \sigma(d)$. 
In this case, 
Claims~\ref{claim_2:case_3:lemma_1:theorem:degree_2}
and \ref{claim_3:case_3:lemma_1:theorem:degree_2} 
imply that we can assume that 
$e \in R$.

First, we consider the case (3a).
In this case, $L(h) \neq \emptyset$.
If $e$ blocks $\sigma$, then 
since (P2) implies that $\sigma(h) \succsim_h \mu(h)$, 
$e$ blocks $\mu$.
Thus, since $\mu \subseteq L$, 
$e \in R \cap {\sf block}(L)$.
However, this contradicts (R4). 
Thus, $e$ does not block $\sigma$.

Next, we consider the case (3b).  
In this case,
(P3) implies that $\sigma(h) \succsim_h L$.
Thus, if $e$ blocks $\sigma$, then 
since $L(h) \neq \emptyset$, 
$e \in R \cap {\sf block}(L)$.
This contradicts (R4). 
Thus, $e$ does not block $\sigma$. 

Third, we consider the case (3c).
In this case, Claim~\ref{claim_1:case_3:lemma_1:theorem:degree_2}
and (P4) imply that 
(i) $\sigma(h) \succsim_h e$ if $e \sim_d \sigma(d)$, and 
(ii) $\sigma(h) \succ_h e$ if $e \succ_d \sigma(d)$.
Thus, $e$ does not block $\sigma$.

Forth, 
we consider the case (3d).
The definition of $\sigma$ implies that 
$\sigma(h) = \mu(h) = \emptyset$. 
Since $e \in R(h)$ and $Z \notin \mathcal{R}$, 
we have $h \in S$. 
Thus, since $\sigma(h) = \emptyset$, 
$e$ does not block $\sigma$.

\subsection{Proof of Lemma~\ref{lemma_2:theorem:degree_2}}
\label{section:proof:lemma_2:theorem:degree_2}

Assume that there exists a stable matching $\sigma$ in $G$. 

Let $X$ be an element in $\mathcal{V}^+$.
Define the subset $A_X \subseteq A$ as follows.
Let $a$ be an arc in $A$ from $Z^{\prime} \in \mathcal{V}$ to $Z \in \mathcal{V}$.
Then $a \in A_X$ if and only if 
$(d_Z,\overline{h}_Z) \in \sigma$. 
Define the subgraph ${\bf D}_X$ of ${\bf D}$ by 
${\bf D}_X := (\mathcal{V},A_X)$.
Furthermore, let $\mathcal{V}_X$ be the set of elements in $\mathcal{V}$
that are reachable from $X$ in ${\bf D}_X$ via a directed path. 
If $\mathcal{V}_X$ contains an element $Y \in \mathcal{V}^-$,  
then proof is done since 
there exists a directed path from $X$ to $Y$ in ${\bf D}$. 
Thus, we consider the case where 
$\mathcal{V}_X$ does not contain an element in $\mathcal{V}^-$.
Recall that 
at most one arc of ${\bf D}$ enters 
$Z$ for every element $Z \in \mathcal{V}$.
In addition, for any element $Z \in \mathcal{V}^+$, 
there does not exist an arc of ${\bf D}$ entering 
$Z$. 
This implies that 
${\bf D}_X$ is acyclic. 
Thus, there exists an element $T \in \mathcal{V}_X$ such that 
any arc in $A_X$ does not leave $T$.  
In what follows, we divide the proof into the following two cases. 
\begin{description}
\item[Case~1.] 
$T \in \mathcal{V}^+$, i.e., $T = X$. 
\item[Case~2.]
$T \in \mathcal{P}^{\ast}$. 
\end{description}

\subsubsection{Case~1} 

We first consider 
the case where $T \in \mathcal{Q}$.
Since $|D_T| < |H_T|$, there exists 
a hospital $h \in H_{T}$ such that 
$\sigma(h) \notin L$. 
Let $(d,h)$ be an arbitrary edge in $L(h)$. 
Then since (R1) implies that $\sigma \subseteq E \setminus R$, 
we have 
$(d,h) \succsim_d \sigma(d)$. 
Thus, if $\sigma(h) = \emptyset$, then 
$(d,h)$ blocks $\sigma$. 
However, this contradicts the fact that 
$\sigma$ is stable. 
Thus, we consider the case where 
$\sigma(h) \neq \emptyset$. 
Assume that $\sigma(h) = (d^{\prime},h)$.
Then since $\sigma$ is stable,    
$(d^{\prime},h) \succsim_h (d,h)$.
If $d^{\prime} \notin D_Z$ holds for every element $Z \in \mathcal{P}^{\ast}$, 
then since $L(d^{\prime}) = \emptyset$, Lemma~\ref{lemma:property_R} 
implies that $(d^{\prime},h) \in R$. 
However, this contradicts the fact that $\sigma \subseteq E \setminus \sigma$. 
Thus, there exists an element $Z \in \mathcal{P}^{\ast}$ 
such that $d_Z = d^{\prime}$. 
Assume that 
$(d^{\prime},h) \succsim_{d^{\prime}} (d^{\prime},h_Z)$.
Since $(d^{\prime},h) = \sigma(h) \in E\setminus R$, 
we have $(d^{\prime},h) \in {\rm Ch}_{d^{\prime}}(E \setminus R)$.
Since $\sigma(h) \notin L$, 
$e \succ_h \sigma(h)$ 
for an edge $e \in L(h)$.
However, since $L$ is flat and $(d,h) \in L$, 
this contradicts the fact that 
$(d^{\prime},h) \succsim_h (d,h)$.
Thus, we have $(d^{\prime},h_Z) \succ_{d^{\prime}} (d^{\prime},h)$. 
This implies that there exists an arc from $X$ to $Z$ in ${\bf D}$, and 
$A_X$ contains this arc. 
However, this contradicts the fact that 
any arc in $A_X$ does not leave $T$. 

Next, we consider the case where 
$T \in \mathcal{R}$. 
Assume that $T = \{h\}$. 
Recall that since $T \in \mathcal{R}$,
$h \in H \setminus S$ and $R(h) \neq \emptyset$.
If $\sigma(h) = \emptyset$, then
every edge $e = (d,h) \in R$ blocks $\sigma$  
since (R1) and (R5) imply that $e \succsim_d \sigma(d)$. 
Thus, we consider the case where $\sigma(h) \neq \emptyset$. 
Since $T \in \mathcal{R}$, 
$\sigma(h) \notin L$.
Assume that $\sigma(h) = (d^{\prime},h)$.
If $d^{\prime} \notin D_Z$ holds for every element $Z \in \mathcal{P}^{\ast}$, 
then since $L(d^{\prime}) = \emptyset$, Lemma~\ref{lemma:property_R} 
implies that $(d^{\prime},h) \in R$. 
However, this contradicts the fact that $\sigma \subseteq E \setminus \sigma$. 
Thus, there exists an element $Z \in \mathcal{P}^{\ast}$ 
such that $d_Z = d^{\prime}$. 
Assume that 
$(d^{\prime},h) \succsim_{d^{\prime}} (d^{\prime},h_Z)$.
Then 
since $(d^{\prime},h) = \sigma(h) \in E\setminus R$, 
we have $(d^{\prime},h) \in {\rm Ch}_{d^{\prime}}(E \setminus R)$.
However, this contradicts the fact that $L(h) = \emptyset$. 
Thus, $(d^{\prime},h_Z) \succ_{d^{\prime}} (d^{\prime},h)$. 
In order to prove this case, 
we prove that, for every edge $e \in R(h)$, 
the following conditions (i) and (ii) are satisfied. 
\begin{description}
\item[(i)]
If $e \sim_d \mu(d)$, then $(d^{\prime},h) \succsim_h e$.
\item[(ii)]
If $e \succ_d \mu(d)$, then $(d^{\prime},h) \succ_h e$. 
\end{description} 
If we can prove this, then 
there exists an arc from $X$ to $Z$ in ${\bf D}$, and 
$A_X$ contains this arc. 
This contradicts the fact that 
any arc in $A_X$ does not leave $T$.

In order to prove the conditions (i) and (ii) are satisfied, we first prove that 
$\mu(d) \succsim_d \sigma(d)$. 
\begin{itemize}
\item
If $\mu(d) = \emptyset$, 
then 
(R3) implies that 
$E(d) \setminus R = \emptyset$.
Thus, since $\sigma \subseteq E \setminus R$, 
$\sigma(d) = \emptyset$.
This implies that, in this case, $\mu(d) \sim_d \sigma(d)$
\item
If $\mu(d) \neq \emptyset$, then 
since (R1) implies that $\sigma \subseteq E \setminus R$, 
(R2) implies that $\mu(d) \succsim_d \sigma(d)$. 
\end{itemize}
We continues the proof of the conditions (i) and (ii). 
Let $e = (d,h)$ be an edge in $R(h)$. 
\begin{itemize}
\item
Assume that $e \sim_d\mu(d)$ and 
$e \succ_h (d^{\prime},h)$ (i.e., 
the condition (i) is not satisfied). 
Then 
since $\mu(d) \succsim_d \sigma(d)$, we have 
$e \succsim_d \sigma(d)$. 
Thus, since $e \succ_h (d^{\prime},h) = \sigma(h)$, 
$e$ blocks $\sigma$. 
However, this contradicts the fact that $\sigma$ is stable. 
\item
Assume that $e \succ_d \mu(d)$ and 
$e \succsim_h (d^{\prime},h)$ (i.e., the condition (ii) is not satisfied.) 
Then 
since $\mu(d) \succsim_d \sigma(d)$, we have 
$e \succ_d \sigma(d)$. 
Thus, since $e \succsim_h (d^{\prime},h) = \sigma(h)$, 
$e$ blocks $\sigma$. 
However, this contradicts the fact that $\sigma$ is stable. 
\end{itemize}

\subsubsection{Case~2} 

Since $T \neq X$, some arc of ${\bf D}_X$ enters $T$. 
This implies that $\sigma(d_T) = (d_T,\overline{h}_T)$ and  
$(d_T,h_T) \succ_{d_T} (d_T,\overline{h}_T)$.
If $\sigma(h_T) = \emptyset$, then 
since $T \notin \mathcal{V}^-$ (i.e., $h_T \notin S$), 
$(d_T,h_T)$ blocks $\sigma$. 
Thus, we consider the case where $\sigma(h_T) \neq \emptyset$. 
Assume that $\sigma(h_T) = (d,h_T)$.
Then since $(d_T,h_T) \succ_{d_T} \sigma(d_T)$ and 
$\sigma$ is stable, 
$(d,h_T) \succ_{h_T} (d_T,h_T)$. 
If $d \notin D_Z$ holds for every element $Z \in \mathcal{P}^{\ast}$, 
then since $L(d) = \emptyset$, Lemma~\ref{lemma:property_R} 
implies that $(d,h_T) \in R$. 
This contradicts the fact that $\sigma \subseteq E \setminus \sigma$. 
Thus, there exists an element $Z \in \mathcal{P}^{\ast}$ 
such that $d_Z = d$. 
Assume that 
$(d,h_T) \succsim_{d} (d,h_Z)$.
Since $(d,h_Z) \in L$ and 
$(d,h_T) \in E\setminus R$, 
we have $(d,h_T) \in {\rm Ch}_{d}(E \setminus R)$.
Since $T \neq Z$, we have 
$\sigma(h_T) \notin L$.
Thus, 
since $(d_T,h_T) \in L$, 
$(d_T,h_T) \succ_h (d,h_T)$. 
However, this contradicts the fact that 
$(d,h_T) \succ_h (d_T,h_T)$. 
Thus, 
we have $(d,h_Z) \succ_{d} (d,h_T)$.
This implies that 
there exists an arc from $X$ to $Z$ in ${\bf D}$, and 
$A_X$ contains this arc. 
However, 
this contradicts the fact that 
any arc in $A_X$ does not leave $T$.

\section{Proof of Lemma~\ref{lemma:pre-process}} 
\label{section:pre-process} 

In this section, we prove Lemma~\ref{lemma:pre-process}.
Although the setting of this paper and the setting in \cite{Irving94,Manlove99} 
are different, 
this lemma can be proved by using an idea similar to the 
idea used in \cite{Irving94,Manlove99}. 

For each subset $F \subseteq E$, 
we define the function $\rho_F$ on $2^{D[F]}$ by 
\begin{equation*}
\rho_F(X) := |\Gamma_F(X)| - |X|
\end{equation*}
for each subset $X \subseteq D[F]$. 
For every subset $F \subseteq E$,
it is not difficult to see that 
\begin{equation*}
\rho_F(X) + \rho_F(Y) \ge \rho_F(X \cup Y) + \rho_F(X \cap Y)
\end{equation*}
for every pair of 
subsets $X,Y \subseteq D[F]$ (i.e., $\rho_F$ is submodular).
For every subset $F \subseteq E$, 
it is known that 
there exists the unique (inclusion-wise)  
minimal minimizer of $\rho_F$, and we can find it in polynomial 
time (see, e.g., \cite[Note~10.12]{Murota03}). 
Furthermore, it is well known that, for every 
subset $F \subseteq E$, there exists a matching 
$\mu$ in $G$ such that 
$\mu \subseteq F$ and $|\mu| = |D[F]|$ if and only if, 
for every subset $X \subseteq D[F]$, 
$|X| \le |\Gamma_F(X)|$ holds~\cite{Hall35} (see also \cite[Theorem~16.7]{Schrijver02}). 

We are now ready to propose our algorithm for
proving Lemma~\ref{lemma:pre-process} 
(see Algorithm~\ref{alg:pre-process}). 

\begin{algorithm}[h]
Set $t := 0$. Define ${\sf R}_0 := \emptyset$.\\
\Do{${\sf R}_t \neq {\sf P}_{t,i_t}$}
{
  Set $t := t + 1$ and $i := 0$. Define ${\sf P}_{t,0} := {\sf R}_{t-1}$.\\
  \Do{${\sf P}_{t,i} \neq {\sf P}_{t,i-1}$}
  {
    Set $i := i+1$.\\
    Define $L_{t,i} := {\rm Ch}(E \setminus {\sf P}_{t,i-1})$.\\ 
    Define ${\sf Q}_{t,i} := {\sf P}_{t,i-1} \cup 
    ({\rm Ch}_D(E \setminus {\sf P}_{t,i-1}) \setminus L_{t,i})$.\\ 
    \uIf{${\sf P}_{t,i-1} \subsetneq {\sf Q}_{t,i}$}
    {
        Define ${\sf P}_{t,i} := {\sf Q}_{t,i}$.
    }    
    \Else
    {
        Define the bipartite graph $G_{t,i}$ by $G_{t,i} := (D, H; L_{t,i})$.\\
        Find a maximum-size matching $\mu_{t,i}$ in $G_{t,i}$.\\
        \uIf{$|\mu_{t,i}| < |D[E \setminus {\sf P}_{t,i-1}]|$}
        {
          Find the minimal minimizer $N_{t,i}$ of $\rho_{L_{t,i}}$.\\
          Define ${\sf P}_{t,i} := {\sf Q}_{t,i} \cup L_{t,i}(N_{t,i})$. 
        }
        \Else
        {
          Define ${\sf P}_{t,i} := {\sf Q}_{t,i}$.  
        }
    }
   
   }  
   Define $i_t := i$.\\
   \uIf{${\sf P}_{t,i_t} \cap {\sf block}({\rm Ch}(E \setminus {\sf P}_{t,i_t})) \neq \emptyset$}
   {
       Define $b_t = (d_t,h_t)$ as an edge in ${\sf P}_{t,i_t} 
       \cap {\sf block}({\rm Ch}(E \setminus {\sf P}_{t,i_t}))$.\\
       Define ${\sf R}_t := {\sf P}_{t,i_t} \cup ({\rm Ch}(E \setminus {\sf P}_{t,i_t}) \cap E(h_t))$. 
   }
   \Else
   {
       Define ${\sf R}_t := {\sf P}_{t,i_t}$. 
   }
}
Define $k := t$. Output ${\sf R}_k$ and $\mu_{k,i_k}$, and halt.
\caption{Algorithm for proving Lemma~\ref{lemma:pre-process}}
\label{alg:pre-process}
\end{algorithm}

\begin{lemma} \label{lemma:pre-process_iteration}
In the course of Algorithm~\ref{alg:pre-process}, 
the following statements hold. 
\begin{description}
\item[(i)]
In each iteration of Steps~2 to 28, 
the number of iterations of Steps~4 to 20
is at most $|E|$.
\item[(ii)]
The number of iterations of Steps~2 to 28 
is at most $|E|$.
\end{description}
\end{lemma}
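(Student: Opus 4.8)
The plan is to prove both statements by the same mechanism: in each non-terminating iteration the relevant set grows \emph{strictly} while remaining a subset of $E$, so monotonicity together with $|E| < \infty$ bounds the number of iterations. Throughout I will use one elementary fact repeatedly, namely that every application of the choice operators only selects edges that are already present, so ${\rm Ch}(F) \subseteq F$ and ${\rm Ch}_D(F) \subseteq F$ for every $F \subseteq E$. In particular, $L_{t,i} = {\rm Ch}(E \setminus {\sf P}_{t,i-1})$ and ${\rm Ch}_D(E \setminus {\sf P}_{t,i-1})$ are \emph{disjoint} from ${\sf P}_{t,i-1}$, and likewise ${\rm Ch}(E \setminus {\sf P}_{t,i_t})$ is disjoint from ${\sf P}_{t,i_t}$; this is what makes every newly appended set genuinely new.

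For statement (i) I would first record that in every inner iteration ${\sf P}_{t,i} \supseteq {\sf P}_{t,i-1}$, since ${\sf Q}_{t,i}$ is ${\sf P}_{t,i-1}$ augmented by extra edges and ${\sf P}_{t,i}$ is either ${\sf Q}_{t,i}$ or ${\sf Q}_{t,i} \cup L_{t,i}(N_{t,i})$. Then I would check, following the three branches of Steps~8--20, that the inclusion is strict unless the loop halts. In the branch ${\sf P}_{t,i-1} \subsetneq {\sf Q}_{t,i}$ strictness is immediate. In the branch where $|\mu_{t,i}| < |D[E \setminus {\sf P}_{t,i-1}]|$, the Hall-type criterion recalled just before the algorithm fails, so the minimal minimizer of the submodular function $\rho_{L_{t,i}}$ satisfies $\rho_{L_{t,i}}(N_{t,i}) < 0 = \rho_{L_{t,i}}(\emptyset)$; hence $N_{t,i} \neq \emptyset$, and since $N_{t,i} \subseteq D[L_{t,i}]$ we get $L_{t,i}(N_{t,i}) \neq \emptyset$. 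As $L_{t,i}(N_{t,i}) \subseteq L_{t,i}$ is disjoint from ${\sf P}_{t,i-1} = {\sf Q}_{t,i}$ in this branch, appending it strictly enlarges the set. The only remaining branch sets ${\sf P}_{t,i} := {\sf Q}_{t,i} = {\sf P}_{t,i-1}$, which is exactly the iteration on which the inner do--while halts. Thus every inner iteration before the last strictly increases $|{\sf P}_{t,i}|$, and since these sizes lie in $\{0,1,\dots,|E|\}$, the number of inner iterations is bounded by $|E|$.

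For statement (ii) I would run the same argument at the level of the sets ${\sf R}_t$. Because ${\sf P}_{t,0} = {\sf R}_{t-1}$ and the inner loop only enlarges ${\sf P}$, we have ${\sf P}_{t,i_t} \supseteq {\sf R}_{t-1}$, and by construction ${\sf R}_t \supseteq {\sf P}_{t,i_t}$, so ${\sf R}_t \supseteq {\sf R}_{t-1}$ always. The outer loop continues exactly when ${\sf R}_t \neq {\sf P}_{t,i_t}$, which occurs only in the branch where a blocking edge $b_t = (d_t,h_t)$ is found and ${\sf R}_t = {\sf P}_{t,i_t} \cup ({\rm Ch}(E \setminus {\sf P}_{t,i_t}) \cap E(h_t))$. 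Writing $L := {\rm Ch}(E \setminus {\sf P}_{t,i_t})$, the definition of ${\sf block}(L)$ forces $L(h_t) \neq \emptyset$, so the appended set $L \cap E(h_t) = L(h_t)$ is nonempty, and being contained in $L$ it is disjoint from ${\sf P}_{t,i_t}$. Hence each continuing outer iteration yields ${\sf R}_t \supsetneq {\sf P}_{t,i_t} \supseteq {\sf R}_{t-1}$, a strict increase from ${\sf R}_0 = \emptyset$, so the number of outer iterations is again bounded by $|E|$.

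The main obstacle is the matching branch of part~(i): one must be sure that the appended edges $L_{t,i}(N_{t,i})$ are simultaneously nonempty and new. I expect nonemptiness to follow cleanly from the deficiency form of Hall's theorem (the minimal minimizer of $\rho_{L_{t,i}}$ is nonempty precisely when no matching saturates $D[L_{t,i}]$), and newness from the inclusion ${\rm Ch}(\cdot) \subseteq (\cdot)$; everything else is careful bookkeeping of $\subseteq$ versus $\subsetneq$ across the branches.
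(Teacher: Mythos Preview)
Your argument is correct and follows the same route as the paper's own proof: both parts rest on the monotonicity ${\sf P}_{t,i-1}\subseteq{\sf P}_{t,i}$ and ${\sf R}_{t-1}\subseteq{\sf R}_t$, together with the observation that the loop conditions force strict growth on every non-terminating pass. You simply spell out what the paper leaves implicit, namely the branch-by-branch verification that the appended sets are nonempty (via Hall's criterion for $L_{t,i}(N_{t,i})$ and via the $L(h_t)\neq\emptyset$ clause in the definition of ${\sf block}$) and disjoint from the current set (via ${\rm Ch}(F)\subseteq F$).
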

\begin{proof}
The statement (i) follows from the fact that ${\sf P}_{t,i-1} \subseteq {\sf P}_{t,i}$.
Furthermore, 
the statement (ii) follows from the fact that ${\sf R}_{t-1} \subseteq {\sf R}_{t}$.
This completes the proof. 
\end{proof} 

Since Step~11 of Algorithm~\ref{alg:pre-process} 
can be done in polynomial time 
by using, e.g., the algorithm in \cite{HopcroftK73}, 
Lemma~\ref{lemma:pre-process_iteration} implies that 
Algorithm~\ref{alg:pre-process} is a polynomial-time 
algorithm. 

\begin{lemma} \label{lemma:ch_d}
In the course of Algorithm~\ref{alg:pre-process}, 
if 
${\sf P}_{t,i-1} = {\sf Q}_{t,i}$, then 
$|D[L_{t,i}]| = |D[E \setminus {\sf P}_{t,i-1}]|$. 
\end{lemma}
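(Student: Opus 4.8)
The plan is to reduce everything to two elementary observations about the choice operators, after which the claim falls out by reading off the definition of ${\sf Q}_{t,i}$. Write $F := E \setminus {\sf P}_{t,i-1}$, so that $L_{t,i} = {\rm Ch}(F) = {\rm Ch}_H({\rm Ch}_D(F))$ and ${\sf Q}_{t,i} = {\sf P}_{t,i-1} \cup ({\rm Ch}_D(F) \setminus L_{t,i})$. In fact I would aim for the stronger statement $D[L_{t,i}] = D[F]$ of doctor-sets, which immediately yields the asserted equality of cardinalities $|D[L_{t,i}]| = |D[E \setminus {\sf P}_{t,i-1}]|$.

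First I would record the identity $D[{\rm Ch}_D(F)] = D[F]$, which holds for every $F$ and uses no hypothesis. For any doctor $d$, the edges of ${\rm Ch}_D(F)$ incident to $d$ are exactly ${\rm Ch}_d(F)$, since ${\rm Ch}_{d'}(F)$ for $d' \neq d$ contains no edge at $d$. Because $\succsim_d$ is a total transitive relation on the finite set $F(d) \cup \{\emptyset\}$, it admits a maximal element, so ${\rm Ch}_d(F) \neq \emptyset$ precisely when $F(d) \neq \emptyset$; hence $d$ has an edge in ${\rm Ch}_D(F)$ if and only if $d \in D[F]$. Moreover, since ${\rm Ch}_H(G) \subseteq G$ for every $G$, we have $L_{t,i} = {\rm Ch}_H({\rm Ch}_D(F)) \subseteq {\rm Ch}_D(F)$, which already gives the easy inclusion $D[L_{t,i}] \subseteq D[{\rm Ch}_D(F)] = D[F]$ with no appeal to the hypothesis.

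The hypothesis ${\sf P}_{t,i-1} = {\sf Q}_{t,i}$ supplies the reverse inclusion. The key point is that ${\rm Ch}_D(F) \subseteq F = E \setminus {\sf P}_{t,i-1}$, so ${\rm Ch}_D(F)$ is disjoint from ${\sf P}_{t,i-1}$; in particular the set ${\rm Ch}_D(F) \setminus L_{t,i}$ occurring in the definition of ${\sf Q}_{t,i}$ is disjoint from ${\sf P}_{t,i-1}$ as well. Since ${\sf P}_{t,i-1} \subseteq {\sf Q}_{t,i}$ always holds, the assumed equality ${\sf P}_{t,i-1} = {\sf Q}_{t,i}$ forces the disjoint-from-${\sf P}_{t,i-1}$ set ${\rm Ch}_D(F) \setminus L_{t,i}$ to be empty, i.e.\ ${\rm Ch}_D(F) \subseteq L_{t,i}$. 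Combining this with the previous paragraph, every $d \in D[F] = D[{\rm Ch}_D(F)]$ has an edge in ${\rm Ch}_D(F) \subseteq L_{t,i}$, so $d \in D[L_{t,i}]$; thus $D[F] \subseteq D[L_{t,i}]$, and equality follows. I do not expect a genuine obstacle here: the argument is purely set-theoretic and invokes no property of the algorithm beyond the defining formula for ${\sf Q}_{t,i}$. The only points requiring care are keeping straight the two inclusions $L_{t,i} \subseteq {\rm Ch}_D(F)$ (from ${\rm Ch}_H(G) \subseteq G$) and ${\rm Ch}_D(F) \subseteq L_{t,i}$ (from the hypothesis), and the disjointness step that makes the set difference collapse.
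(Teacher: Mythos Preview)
Your proof is correct and follows essentially the same route as the paper's: both derive from the hypothesis that ${\rm Ch}_D(F)\setminus L_{t,i}=\emptyset$, hence $L_{t,i}={\rm Ch}_D(F)$, and then use $D[{\rm Ch}_D(F)]=D[F]$. You simply spell out in detail the two steps the paper leaves implicit (why the hypothesis forces $L_{t,i}={\rm Ch}_D(F)$, and why $D[{\rm Ch}_D(F)]=D[F]$).
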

\begin{proof}
If 
${\sf P}_{t,i-1} = {\sf Q}_{t,i}$, then
$L_{t,i} = {\rm Ch}_D(E \setminus {\sf P}_{t,i-1})$. 
Thus, 
\begin{equation*}
D[E \setminus {\sf P}_{t,i-1}] = D[{\rm Ch}_D(E \setminus {\sf P}_{t,i-1})] = D[L_{t,i}].
\end{equation*}
This completes the proof.
\end{proof}

In what follows, we define $\mu := \mu_{k,i_k}$ and 
$R := {\sf R}_k$. 

\begin{lemma} \label{lemma:R1}
There does not exist a stable matching in $G$ 
containing an edge in $R$.
\end{lemma}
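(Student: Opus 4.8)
The plan is to prove Lemma~\ref{lemma:R1} by induction on the order in which edges enter the nested family of sets produced by Algorithm~\ref{alg:pre-process}. Since ${\sf R}_{t-1} = {\sf P}_{t,0} \subseteq {\sf P}_{t,1} \subseteq \dots \subseteq {\sf P}_{t,i_t} \subseteq {\sf R}_t$ and these sets only grow, and since ${\sf R}_0 = \emptyset$ is trivial, it suffices to maintain the invariant that no stable matching contains any edge added so far; equivalently, that every stable matching $\sigma$ satisfies $\sigma \subseteq E \setminus P$, where $P$ denotes the set currently built. Assuming this, I would show that each batch of newly inserted edges is again avoided by every stable matching. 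There are exactly three places where edges are added: the set ${\rm Ch}_D(E \setminus {\sf P}_{t,i-1}) \setminus L_{t,i}$ absorbed into ${\sf Q}_{t,i}$, the set $L_{t,i}(N_{t,i})$ added in the deficiency branch, and the set ${\rm Ch}(E \setminus {\sf P}_{t,i_t}) \cap E(h_t)$ added in the blocking branch. Throughout I would use the basic domination fact that, because the edges of $L := {\rm Ch}_D(E \setminus P)$ are each doctor's best edges in $E \setminus P$ and $\sigma \subseteq E \setminus P$, any edge of $L(d)$ is $\succsim_d \sigma(d)$.

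For the ${\sf Q}_{t,i}$-edges, let $(d,h) \in {\rm Ch}_D(E \setminus P) \setminus L$. Since $(d,h)$ is a best edge of $d$ but is filtered out by ${\rm Ch}_H$, the hospital $h$ carries a best proposal $(d',h) \in L$ with $(d',h) \succ_h (d,h)$, and $L(h) \neq \emptyset$. If some stable $\sigma$ contained $(d,h)$, then $\sigma(h) = (d,h)$, while $(d',h) \succsim_{d'} \sigma(d')$ by domination and $(d',h) \succ_h \sigma(h)$; hence $(d',h)$ blocks $\sigma$ (note $\sigma(h) \neq \emptyset$ even when $h \in S$), a contradiction. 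For the blocking branch, note that $b_t = (d_t,h_t) \in {\sf P}_{t,i_t} \cap {\sf block}(L)$ is already in $P$, hence dead by the induction hypothesis, so $b_t \notin \sigma$; moreover at the inner loop's termination $L = {\rm Ch}_D(E \setminus {\sf P}_{t,i_t})$ by Lemma~\ref{lemma:ch_d}, so domination gives $b_t \succsim_{d_t} L \succsim_{d_t} \sigma(d_t)$. If $\sigma$ contained a newly deleted edge $(d,h_t) \in L(h_t)$, then $\sigma(h_t) = (d,h_t)$ is an $L$-edge; unfolding the definition of ${\sf block}(L)$ — $b_t \succ_{d_t} L$ forces $b_t \succsim_{h_t} L$, and $b_t \sim_{d_t} L$ forces $b_t \succ_{h_t} L$ — and using that $L(h_t)$ is flat, one checks in both subcases that $b_t$ weakly blocks $\sigma$ on both endpoints and strongly on at least one, so $b_t$ blocks $\sigma$, a contradiction.

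The main obstacle is the deficiency branch. Here Hall's theorem in the deficiency form cited above, applied to $G_{t,i} = (D,H;L)$ with $|\mu_{t,i}| < |D[L]|$, produces a set of negative $\rho_L$-value, and submodularity of $\rho_L$ yields the unique minimal minimizer $N = N_{t,i}$ with $\rho_L(N) = |\Gamma_L(N)| - |N| < 0$. Minimality forces $\rho_L(N \setminus \{d\}) > \rho_L(N)$, hence $\Gamma_L(N \setminus \{d\}) = \Gamma_L(N)$ for every $d \in N$; equivalently, every hospital of $\Gamma_L(N)$ is an $L$-neighbour of at least two doctors of $N$. I would then suppose for contradiction that some stable $\sigma$ matches a doctor $d_0 \in N$ to one of its best edges, $\sigma(d_0) = (d_0,h_0) \in L(d_0)$. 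Using the two-neighbour property together with the fact that all edges of $L(h_0)$ are tied at $h_0$, stability propagates the property ``$d$ is matched to one of its best edges'' from $d_0$ to the other $N$-neighbours of $h_0$, since otherwise such a neighbour would form a blocking pair at $h_0$; iterating this propagation along shared neighbourhood hospitals and combining with the strict inequality $|\Gamma_L(N)| < |N|$ is intended to yield a counting contradiction.

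The delicate point, which is where the closure set $S$ enters, is that a hospital $h \in \Gamma_L(N) \cap S$ may be left empty by $\sigma$ and is then unblockable, so it acts as an escape valve for the propagation and the counting. I would resolve this by showing that such an empty hospital can never carry an edge of $L(N)$, so that the doctors of $N$ forced onto it are in fact matched strictly below their best edges and therefore do not contribute to the overdemanded count, allowing the inequality $|\Gamma_L(N)| < |N|$ to close the argument. Making this propagation-and-counting step rigorous, with the full case distinction on whether each neighbourhood hospital lies in $S$ and whether $\sigma$ leaves it empty, is the technical heart of the proof; the other two insertion cases, handled above, are comparatively routine.
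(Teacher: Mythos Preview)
Your inductive scaffolding and your treatment of the ${\sf Q}_{t,i}$-edges and of the blocking branch are correct and essentially match the paper. The gap is in the deficiency branch.

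Your propagation observation is right: if $d_0\in N$ has $\sigma(d_0)=(d_0,h_0)\in L$, then every $d_1\in N$ with $(d_1,h_0)\in L$ also satisfies $\sigma(d_1)\in L$ (here you are implicitly using that ${\sf P}_{t,i-1}={\sf Q}_{t,i}$ forces $L={\rm Ch}_D(E\setminus P)$, so ``tied with a best edge of $d_1$'' already means ``in $L(d_1)$''). Writing $T=\{d\in N:\sigma(d)\in L\}$ and $H_T$ for the set of hospitals $\sigma$-matched to $T$, propagation says precisely that every $N$-neighbour of a hospital in $H_T$ lies in $T$, i.e.\ $\Gamma_L(N\setminus T)\cap H_T=\emptyset$.

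What does \emph{not} follow is a contradiction from the bare inequality $|\Gamma_L(N)|<|N|$ together with the two-neighbour property. That property was itself a consequence of the minimality of $N$, but you then discard minimality and try to close by pure counting; this fails, because propagation need not reach all of $N$. The fix is to use minimality once more, directly: from $\Gamma_L(N\setminus T)\subseteq\Gamma_L(N)\setminus H_T$ and $|H_T|=|T|$ you get
\[
\rho_L(N\setminus T)\;\le\;|\Gamma_L(N)|-|H_T|-(|N|-|T|)\;=\;\rho_L(N),
\]
which contradicts that $N$ is the \emph{minimal} minimizer as soon as $T\neq\emptyset$. This is exactly the content of the paper's Claim~\ref{claim:R1_1}, read contrapositively; the paper phrases it by exhibiting a concrete blocking pair $(d,h)$ with $d\in N\setminus T$ and $\sigma(h)\in L$.

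Finally, the closure set $S$ is a red herring in this branch. Propagation runs only through hospitals of $H_T$, all of which are matched by $\sigma$; an $S$-hospital left empty by $\sigma$ never enters the argument, so there is no ``escape valve'' to worry about. Your proposed resolution --- that an empty $S$-hospital cannot carry an edge of $L(N)$ --- is false in general and, fortunately, unnecessary.
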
 
\begin{proof}
An edge $e \in R$ is called a \emph{bad edge} if 
there exists a stable matching in $G$ containing 
$e$. 
If we can prove that there does not exist a bad edge 
in $R$, then the proof is done. 
Assume that there exists a bad edge in $R$. 
Define $\Delta$ as the set of integers $i \in [k]$ such that 
${\sf R}_i \setminus {\sf R}_{i-1}$ contains 
a bad edge in $R$. 
Let $z$ be the minimum integer in $\Delta$. 

First, we consider the case where 
${\sf P}_{z,i_z} \setminus {\sf R}_{z-1}$ contains a bad edge in $R$. 
We define $j$ as the minimum integer in $[i_z]$ such that 
${\sf P}_{z,j} \setminus {\sf P}_{z,j-1}$ contains a bad edge in $R$.
Notice that, in this case, 
${\sf P}_{z,j-1}$ does not contain a bad edge in $R$. 

Assume that ${\sf Q}_{z,j} \setminus {\sf P}_{z,j-1}$ contains 
a bad edge in $R$. 
Let $f = (p,q)$ be a bad edge contained in ${\sf Q}_{z,j} \setminus {\sf P}_{z,j-1}$.
Then there exists a stable matching $\sigma$ in $G$ such that 
$f \in \sigma$. 
Since $f \in {\sf Q}_{z,j} \setminus {\sf P}_{z,j-1}$, 
there exists an edge $e = (d,q) \in {\rm Ch}_D(E \setminus {\sf P}_{z,j-1})$
such that $e \succ_q f$. 
If $\sigma(d) \succ_d e$, then we have
$\sigma(d) \in {\sf P}_{z,j-1}$. 
However, $\sigma(d)$ is a bad edge in $R$, and 
this contradicts the fact that 
${\sf P}_{z,j-1}$ does not contain a bad edge in $R$. 
Thus, $e \succsim_d \sigma(d)$, and 
$e$ blocks $\sigma$. 
This contradicts the fact that $\sigma$ is stable. 

Assume that ${\sf Q}_{z,j} \setminus {\sf P}_{z,j-1}$ does not contain 
a bad edge in $R$. 
In this case, ${\sf P}_{z,j} \setminus {\sf Q}_{z,j}$ contains a bad edge in $R$. 
Let $f = (p,q)$ be a bad edge contained in ${\sf P}_{z,j} \setminus {\sf Q}_{z,j}$.
Then there exists a stable matching $\sigma$ in $G$ such that 
$f \in \sigma$. 
In this case, $f \in L_{z,j}(N_{z,j})$. 
Define $L := L_{z,j}$ and 
$N := N_{z,j}$. 
Notice that Lemma~\ref{lemma:ch_d}
implies that 
$|D[L]| = |D[E \setminus {\sf P}_{z,j-1}]|$. 
Thus, since $|\mu_{z,j}| < |D[E \setminus {\sf P}_{z,j-1}]|$, 
we have $|\mu_{z,j}| < |D[L]|$.
This implies that 
$|N| > |\Gamma_L(N)|$. 
Furthermore, $L = {\rm Ch}_D(E \setminus {\sf P}_{z,j-1})$. 

\begin{claim} \label{claim:R1_1}
There exists a doctor $d \in N$ satisfying the following 
conditions.
\begin{itemize}
\item
$\sigma(d) \notin L$.
\item
There exists a hospital $h \in \Gamma_L(d)$ such that 
$\sigma(h) \in L$.
\end{itemize} 
\end{claim}
\begin{proof}
Define $T$ as the set of doctors $d \in N$ 
such that $\sigma(d) \in L$.
Notice that the existence of $f$ implies that 
$T \neq \emptyset$. 
Then $|\Gamma_L(T)| \ge |T|$. 
Notice that $T \subsetneq N$. 
In order to prove this claim, 
it is sufficient to 
prove that 
a doctor $d \in N \setminus T$ such that 
$\Gamma_{L}(d) \cap \Gamma_{L}(T) \neq \emptyset$. 
If $\Gamma_{L}(d) \cap \Gamma_{L}(T) = \emptyset$ for every doctor 
$d \in N \setminus T$, 
then $\Gamma_L(N \setminus T) \subseteq \Gamma_L(N) \setminus \Gamma_{L}(T)$. 
Thus, 
\begin{equation} \label{eq1:claim:R1_1}
\begin{split}
& |\Gamma_L(N \setminus T)| - |N \setminus T|
\le 
|\Gamma_L(N) \setminus \Gamma_{L}(T)| - |N| + |T|\\
& =   
|\Gamma_L(N)| - |\Gamma_{L}(T)| - |N| + |T|
\le 
|\Gamma_L(N)| - |N|.
\end{split}
\end{equation}
Since $T \neq \emptyset$, $N\setminus T \subsetneq N$. 
Thus, \eqref{eq1:claim:R1_1} contradicts the fact that 
$N$ is the minimal minimizer of $\rho_{L}$. 
This completes the proof. 
\end{proof} 

Let $d$ be a doctor in $N$ satisfying the conditions in Claim~\ref{claim:R1_1}. 
Furthermore, let $h$ be 
a hospital in
$\Gamma_{L}(d)$ such that 
$\sigma(h) \in L$. 
Since 
$(d,h), \sigma(h) \in L$, 
we have $(d,h) \sim_h \sigma(h)$.  
Furthermore, since 
${\sf P}_{z,j-1}$ does not contain a bad edge in $R$ and $\sigma$ is stable, 
we have $\sigma(d) \in E \setminus {\sf P}_{z,j-1}$. 
Thus, since $\sigma(d) \notin L$ and $(d,h) \in L$, 
we have $(d,h) \succ_d \sigma(d)$. 
(Recall that $L = {\rm Ch}_D(E \setminus {\sf P}_{z,j-1})$.)
However, this implies that 
$(d,h)$ blocks $\sigma$.
This contradicts the fact that $\sigma$ is stable. 

Next, we consider the case where 
${\sf P}_{z,i_z} \setminus {\sf R}_{z-1}$ does not contain a bad edge in $R$. 
In this case, 
${\sf P}_{z,i_z}$ does not contain a bad edge in $R$. 
Define $K := {\rm Ch}(E \setminus {\sf P}_{z,i_z})$. 
Then $K(h_z)$ 
contains a bad edge in $R$.
Let $f$ be a bad edge in $K(h_z)$.
Then there exists a stable matching $\sigma$ in $G$ such that 
$f \in \sigma$.
Since ${\sf P}_{z,i_z-1} = {\sf P}_{z,i_z}$, we have 
\begin{equation*}
|\mu_{z,i_z}| = |D[E \setminus {\sf P}_{z,i_z-1}]| = |D[E \setminus {\sf P}_{z,i_z}]|.
\end{equation*} 
This implies that $\mu_{z,i_z}(d) \neq \emptyset$ holds 
for every doctor $d \in D[E \setminus {\sf P}_{z,i_z}]$. 
Thus, since $\mu_{z,i_z} \subseteq K$, 
we have $K(d) \neq \emptyset$
for every doctor $d \in D[E \setminus {\sf P}_{z,i_z}]$. 
Since ${\sf P}_{z,i_z}$ does not contain a bad edge in $R$, 
$\sigma \subseteq E \setminus {\sf P}_{z,i_z}$. 
Thus, if $d \in D[E \setminus {\sf P}_{z,i_z}]$, then 
$e \succsim_d \sigma(d)$ for an edge $e \in K(d)$. 
Furthermore, if $d \notin D[E \setminus {\sf P}_{z,i_z}]$ (i.e., 
$E(d) \subseteq {\sf P}_{z,i_z}$), then 
$\sigma(d) = \emptyset$. 
Since $b_z \in {\sf P}_{z,i_z}$, 
we have $b_z \notin \sigma$. 
Since $b_z \in {\sf block}(K)$, 
we have 
$b_z \succsim_{d_z} K$.
\begin{itemize}
\item
Assume that $b_z \succ_{d_z} K$. 
Then the above observation implies that 
$b_z \succ_{d_z} \sigma(d_z)$.
Furthermore, since
$b_z \in {\sf block}(K)$, 
$b_z \succsim_{h_z} K$.
Thus, since $f \in K$, 
$b_z \succsim_{h_z} f$. 
This implies that 
$b_z$ blocks $\sigma$.
However, this contradicts the fact that 
$\sigma$ is stable. 
\item
Assume that $b_z \sim_{d_z} K$. 
Then the above observation implies that 
$b_z \succsim_{d_z} \sigma(d_z)$.
Furthermore, since
$b_z \in {\sf block}(K)$, 
$b_z \succ_{h_z} K$.
Thus, since $f \in K$, 
$b_z \succ_{h_z} f$. 
This implies that 
$b_z$ blocks $\sigma$.
However, this contradicts the fact that 
$\sigma$ is stable. 
\end{itemize}

Thus, in any case, there does not exist a bad edge in $R$.
This completes the proof. 
\end{proof} 

\begin{lemma} \label{lemma:R}
The following statements hold.
\begin{description}
\item[(i)]
$\mu \subseteq {\rm Ch}(E \setminus R)$.
\item[(ii)]
$\mu(d) \neq \emptyset$ holds for every doctor $d \in D[E \setminus R]$.
\item[(iii)]
$R \cap {\sf block}({\rm Ch}(E \setminus R)) = \emptyset$. 
\item[(iv)]
For every doctor $d \in D$ and every pair 
of edges $e \in R(d)$ and $f \in E(d) \setminus R$,  
$e \succsim_d f$. 
\end{description}
\end{lemma}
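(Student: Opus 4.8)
The plan is to read properties (i)--(iii) directly off the branches of Algorithm~\ref{alg:pre-process} that must fire when its two nested loops terminate, and to obtain (iv) from a single monotonicity invariant carried along the increasing chain of sets $\{{\sf P}_{t,i}\}$ and $\{{\sf R}_t\}$. Throughout I write $R = {\sf R}_k = {\sf P}_{k,i_k}$, $L := L_{k,i_k} = {\rm Ch}(E\setminus {\sf P}_{k,i_k-1})$, and $\mu := \mu_{k,i_k}$, and I take the halting of the algorithm as given by Lemma~\ref{lemma:pre-process_iteration}.

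First I would pin down the terminal configuration. The outer loop stops exactly when ${\sf R}_k = {\sf P}_{k,i_k}$. Had the ${\sf block}$-branch fired, ${\sf R}_k$ would equal ${\sf P}_{k,i_k}$ together with the set ${\rm Ch}(E\setminus {\sf P}_{k,i_k})\cap E(h_k)$, which is nonempty (since $b_k\in{\sf block}({\rm Ch}(E\setminus {\sf P}_{k,i_k}))$ forces ${\rm Ch}(E\setminus {\sf P}_{k,i_k})(h_k)\neq\emptyset$) and disjoint from ${\sf P}_{k,i_k}$; this would strictly enlarge the set, a contradiction. Hence the other branch fired, so ${\sf P}_{k,i_k}\cap{\sf block}({\rm Ch}(E\setminus {\sf P}_{k,i_k}))=\emptyset$, which is exactly (iii). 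Likewise the inner loop stops only when ${\sf P}_{k,i_k}={\sf P}_{k,i_k-1}$; the branch with ${\sf P}_{k,i_k-1}\subsetneq{\sf Q}_{k,i_k}$ and the branch defining ${\sf P}_{k,i_k}:={\sf Q}_{k,i_k}\cup L(N_{k,i_k})$ both strictly enlarge the set (the latter because the test $|\mu_{k,i_k}|<|D[E\setminus {\sf P}_{k,i_k-1}]|$ would make the minimal minimizer $N_{k,i_k}$ of $\rho_{L}$ nonempty, whence $L(N_{k,i_k})\neq\emptyset$), so these cannot have occurred. Therefore ${\sf P}_{k,i_k-1}={\sf Q}_{k,i_k}$, the test above failed, and $\mu$ is defined as a maximum matching in $G_{k,i_k}=(D,H;L)$; since $L={\rm Ch}(E\setminus {\sf P}_{k,i_k-1})={\rm Ch}(E\setminus R)$, this immediately gives (i).

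For (ii), I would combine the failed test $|\mu|\ge|D[E\setminus {\sf P}_{k,i_k-1}]|$ with the trivial bound $|\mu|\le|D[L]|$ and Lemma~\ref{lemma:ch_d} (applicable precisely because ${\sf P}_{k,i_k-1}={\sf Q}_{k,i_k}$), which yields $|D[L]|=|D[E\setminus {\sf P}_{k,i_k-1}]|=|D[E\setminus R]|$. Squeezing gives $|\mu|=|D[L]|$, so $\mu$ saturates $D[L]=D[E\setminus R]$, that is, $\mu(d)\neq\emptyset$ for every $d\in D[E\setminus R]$, which is (ii).

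The remaining property (iv) is where the bookkeeping lies, though it is not deep. I would work with the increasing chain $\emptyset={\sf R}_0\subseteq{\sf P}_{1,1}\subseteq\cdots\subseteq{\sf P}_{k,i_k}\subseteq{\sf R}_k=R$ coming from ${\sf P}_{t,0}={\sf R}_{t-1}$ and ${\sf P}_{t,i_t}\subseteq{\sf R}_t$, and establish one invariant: across any single step $A\to B$ of this chain one has $B\setminus A\subseteq{\rm Ch}_D(E\setminus A)$. This is checked uniformly across the three places where edges enter the deleted set: an inner step adds edges only from ${\rm Ch}_D(E\setminus A)\setminus L_{t,i}$ and, when the minimizer branch fires, from $L_{t,i}(N_{t,i})\subseteq L_{t,i}\subseteq{\rm Ch}_D(E\setminus A)$, while the boundary step adds edges from ${\rm Ch}(E\setminus A)\subseteq{\rm Ch}_D(E\setminus A)$. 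Consequently, for every doctor $d$, each $e\in(B\setminus A)\cap E(d)$ lies in ${\rm Ch}_d(E\setminus A)$ and hence satisfies $e\succsim_d g$ for all $g\in E(d)\setminus A$. Now, given $e\in R(d)$ and $f\in E(d)\setminus R$, I would look at the unique chain step $A\to B$ that first inserts $e$; since $f\notin R\supseteq A$ we have $f\in E(d)\setminus A$, and the invariant gives $e\succsim_d f$, proving (iv). The main obstacle is not any single difficult estimate but discipline: the entire argument hinges on correctly identifying which branch of each loop fires at termination, and on checking the chain invariant uniformly over every place an edge can be deleted.
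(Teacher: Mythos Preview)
Your proposal is correct and follows essentially the same approach as the paper: you read off (i)--(iii) from the branches that must fire at termination (the paper compresses this into the single observation ${\sf P}_{k,i_k-1}={\sf P}_{k,i_k}={\sf R}_k$), and you prove (iv) via the same chain invariant $B\setminus A\subseteq{\rm Ch}_D(E\setminus A)$ that the paper uses in its induction. Your exposition is somewhat more explicit---for instance, you justify why the minimizer branch would strictly enlarge and you invoke Lemma~\ref{lemma:ch_d} for (ii) where the paper silently uses $\mu\subseteq E\setminus R$---but the underlying argument is the same.
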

\begin{proof} 
Notice that since Algorithm~\ref{alg:pre-process}
halts 
when $t = k$, we have 
\begin{equation} \label{eq_1:lemma:R}
{\sf P}_{k,i_k-1} = {\sf P}_{k,i_k} = {\sf R}_k.
\end{equation}
The statement (i) follows from the fact that 
$\mu \subseteq {\rm Ch}(E \setminus {\sf P}_{k,i_k-1})$
and \eqref{eq_1:lemma:R}. 
The statement (ii) follows from the fact that 
$|\mu_{k,i_k}| \ge |D[E\setminus {\sf P}_{k,i_k-1}]|$
and \eqref{eq_1:lemma:R}. 
The statement (iii) follows from the fact that
${\sf P}_{k,i_k} \cap {\sf block}({\rm Ch}(E \setminus {\sf P}_{k,i_k})) = \emptyset$
and \eqref{eq_1:lemma:R}. 
What remains is to prove the statement (iv).

Let $d$ be a doctor in $D$. 
For each subset $X \subseteq E(d)$, we consider 
the following statement.
\begin{equation} \label{eq_2:lemma:R}
\mbox{For every pair of edges $e \in X$ and
$f \in E(d) \setminus X$, we have 
$e \succsim_d f$.}
\end{equation} 
Our goal is to prove that 
\eqref{eq_2:lemma:R} holds when $X = R(d)$. 

Let $t$ be a positive integer such that $t \le k$.
Assume that
\eqref{eq_2:lemma:R} holds when $X = {\sf R}_{t-1}(d)$. 
Then we prove that 
\eqref{eq_2:lemma:R} holds when $X = {\sf R}_{t}(d)$. 
This completes the proof. 

Let $i$ be a positive integer such that $i \le i_t$.
Assume that 
\eqref{eq_2:lemma:R} holds when $X = {\sf P}_{t,i-1}(d)$.
If ${\sf P}_{t,i}(d) = {\sf P}_{t,i-1}(d)$, then 
\eqref{eq_2:lemma:R} holds when $X = {\sf P}_{t,i}(d)$.
Otherwise, since 
\begin{equation*}
{\sf P}_{t,i} \setminus {\sf P}_{t,i-1}
\subseteq {\rm Ch}_D(E \setminus {\sf P}_{t,i-1}),
\end{equation*}
\eqref{eq_2:lemma:R} holds when $X = {\sf P}_{t,i}(d)$.
This implies that 
\eqref{eq_2:lemma:R} holds when $X = {\sf P}_{t,i_t}(d)$. 

If ${\sf R}_t = {\sf P}_{t,i_t}$, then 
\eqref{eq_2:lemma:R} holds when $X = {\sf R}_{t}(d)$.
Otherwise, 
since 
\begin{equation*}
{\sf R}_{t} \setminus {\sf P}_{t,i_t} \subseteq 
{\rm Ch}_D(E \setminus {\sf P}_{t,i_t}),
\end{equation*}
\eqref{eq_2:lemma:R} holds when $X = {\sf R}_{t}(d)$.
This completes the proof of the statement (iv). 
\end{proof} 

Lemmas~\ref{lemma:R1} and \ref{lemma:R} imply that 
$R$ and $\mu$ satisfy 
the conditions (R1) to (R5).
This completes the proof.

\end{document}